\definecolor{darkred}{rgb}{0.5,0.2,0.2}
\theoremstyle{mytheorem}
\theoremstyle{myremark}
\theoremstyle{plain}
\newtheorem{theorem}{Theorem}[section]
\newtheorem{assumption}{Assumption}[section]
\newtheorem{lemma}{Lemma}[section]
\theoremstyle{definition}
\theoremstyle{remark}
\newtheorem{remark}{Remark}[section]
\def\bi{\begin{itemize}}
	\def\ei{\end{itemize}}
\numberwithin{equation}{section}
\newif\ifi
\begin{document}
	
\title{Propensity score with factor loadings: the effect of the Paris Agreement
} %

\author{
Angelo Forino\thanks{
Department of Statistical Sciences, 
Sapienza University of Rome, 
00185 Rome, Italy. 
E-mail:
\url{angelo.forino@uniroma1.it}.}
\and
Andrea Mercatanti\thanks{
Department of Statistical Sciences, 
Sapienza University of Rome, 
00185 Rome, Italy. 
E-mail: \url{andrea.mercatanti@uniroma1.it}.}
\and 
Giacomo Morelli\thanks{
Department of Statistical Sciences, 
Sapienza University of Rome, 
00185 Rome, Italy.
E-mail: \url{giacomo.morelli@uniroma1.it}}
}

\maketitle

\begin{abstract}
Factor models for longitudinal data, where policy adoption is unconfounded with respect to a low-dimensional set of latent factor loadings, have become increasingly popular for causal inference. Most existing approaches, however, rely on a causal finite-sample approach or computationally intensive methods, limiting their applicability and external validity. In this paper, we propose a novel causal inference method for panel data based on inverse propensity score weighting where the propensity score is a function of latent factor loadings within a framework of causal inference from super-population. The approach relaxes the traditional restrictive assumptions of causal panel methods, while offering advantages in terms of causal interpretability, policy relevance, and computational efficiency. Under standard assumptions, 
we outline a three-step estimation procedure for the ATT and derive its large-sample properties using M-estimation theory. We apply the method to assess the causal effect of the Paris Agreement — a policy aimed at fostering the transition to a low-carbon economy — on European stock returns. Our empirical results suggest a statistically significant and negative short-run effect on the stock returns of firms that issued green bonds.

\end{abstract}

\vspace*{1.5em}

\noindent {\textbf{Keywords}}: Paris Agreement, Propensity Score, Panel Data, Factor Model, M-Estimation.


\vspace*{1.5em}



\newpage

\maketitle

\section{Introduction}
On 12 December 2015, during the 21st Conference of the Parties (COP21) to the United Nations Framework Convention on Climate Change (UNFCCC), world leaders of 196 countries endorsed the Paris Agreement (PA), a legally binding international treaty on climate change. This landmark agreement came as a surprise to the markets, defining a new course for global efforts to accelerate the transition to a low-carbon economy and mitigate the effects of climate change. Specifically, the PA sets out three ambitious goals: (i) "holding the increase in the global average temperature to well below 2°C above pre-industrial levels", (ii) "increasing the ability to adapt to the adverse impacts of climate change and foster climate resilience and low greenhouse gas emissions development", and (iii) "making finance flows consistent with a pathway towards low greenhouse gas emissions and climate-resilient development" (Art. 2, \cite{unfccc2015report}). In this context, financial markets play a central role in supporting the low-carbon transition by redirecting capital towards sustainable activities. By setting clear and credible long-term objectives, the Agreement has contributed to reshaping expectations and pricing mechanisms across financial markets, raising the question of how investors responded to its adoption. A relevant channel through which firms convey their environmental commitment is the issuance of green bonds, a tangible measure of the innovation driving sustainable finance in recent years.

In this paper, we analyse the short-run response of European equity markets to the announcement of the Agreement, focusing on firms that signal environmental commitment through the issuance of green bonds. For this purpose, we use panel data on quarterly adjusted closing prices for the S\&P350 Europe index, which comprises 350 leading listed companies from 16 developed European countries. 

Recent econometric literature has shown a strong interest in policy evaluation methods for panel data as effective tools to adjust for unobserved confounding, that is a frequent challenge in economic studies. This area builds foundationally on approaches such as Difference-in-Differences (Diff-in-Diffs) 
and Two-Way Fixed Effects (TWFE) models, which are valid under the assumptions of parallel trends and causal effect homogeneity, respectively. However, growing interest in this field has led to numerous refinements and extensions that relax these often too restrictive assumptions. By moving beyond parallel trends, researchers can account for time-varying, unobserved confounders, and incorporating heterogeneous causal effects allows for a deeper understanding of variations in individual treatment responses.

Developments in causal panel factor models\footnote{Panel factor models also have a well-established use in non-causal applications (e.g., Chamberlain, 1984; Liang and Zeger, 1986; Arellano and Honoré, 2001), particularly in the modern literature (Bai, 2003, 2009; Pesaran, 2006; Moon and Weidner, 2015, 2017; Bai and Ng, 2017), even addressing complex asymptotic cases with increases in both time and cross-sectional dimensions.} are particularly promising for addressing the challenges of causal inference with longitudinal data due to their interpretability as interactive fixed effects models \citep{ArkhangelskyImbens2023, Xu2023}. Since the foundational work of \cite{Hsiao_et_al_2012}, the literature in this field—despite its methodological diversity—has been unified by a common thread: the reliance on assumptions of ignorability of the treatment assigmnent given a set of unobserved factor loadings in the alternative forms of strict exogeneity, parallel trends or sequential ignorability, even in conjunction with covariates \citep{Xu2023}. A substantial portion of this literature focuses on methods for comparative case analysis, with only one or a few treated units \citep[e.g.,][]{Amjad_et_al_2018, Arbour_et_al_2021, athey2021matrix, Chernozhukov_et_al_2021, Fan_et_al_2022, Hsiao_et_al_2012, Li_Bell_2017, Li_Sonnier_2023, xu2017generalized}, developed under a causal finite-sample approach. Extensions to scenarios with many treated units, under the same causal approach, have been considered by \cite{Samartsidis_et_al_2020} and \cite{Pang_et_al2022} using Bayesian methods. However, finite-sample inference has certain limitations due to its exclusive focus on the sample. On one hand, this approach ensures a high degree of internal validity, which is particularly valuable in experiments or comparative case analyses involving small target populations, potentially as small as a single treated unit. On the other hand, this narrow focus sacrifices external validity, especially when the policy being evaluated is intended for a larger, potentially time-varying population. Furthermore, when identification relies on unconfoundedness on observables or latent variables—as is common in observational studies—shifting from a finite-sample to a super-population perspective enables a more rigorous treatment of continuous covariates, considering them as draws from continuous distributions, which facilitates the inference \citep{Imbens_Rubin_2015}. To the best of our knowledge, only a few contributions in the literature on causal factor models adopt a causal super-population approach. The earliest, chronologically, is the work by \cite{gobillon2016regional}, which proposes a comparison of alternative methods but does not provide formally justified inference. By contrast, such inference can be found in \cite{Imbens_et_al2021} and \cite{CallawayKarami2023}. The latter rely on time-invariant covariates as instruments, assuming their effects on the outcome remain constant, while the former require post-treatment error terms to be independent of error terms from previous periods. Less restrictive identifying assumptions are proposed by \cite{Brown_Butts_2023} and \cite{Liu_et_al2024} though under a no anticipation assumption and moreover, for the latter, at the cost of relying on computationally intensive methods, such as bootstrap and jackknife, for inference, which can represent a limitation in terms of ease of use for practitioners engaged in policy evaluation with panel data, particularly when working with datasets that include a significant number of treated units. This paper aims to address this gap.

We propose to use the propensity score, defined as the probability of adopting the policy given a low-dimensional set of factor loadings, within a weighting approach under which it is not necessary to assume a the absence of anticipatory effects. This approach facilitates inference under the M-estimation theory, according to which the standard errors can then be derived in the form of a sandwich estimator. In this way, it is possible to take into account all sources of uncertainty: the drawn from a super-population, the estimation of factor loadings on which the propensity score will depend, the estimation of the policy model parameters, and the intentional (counterfactual) missingness in the estimation of the average treatment effect. For the estimation and identification of factor loadings, we rely on the theory of \cite{bai2013principal}. Our contribution is then twofold. First, we propose a method for estimating causal effects in panel data that offers advantages in terms of causal interpretability, ease of use for policy makers and computational efficiency. Second, we evaluate the short-run effect of the Paris Agreement.

We find a statistically significant negative impact of the Agreement on stock returns for European green firms right after the first quarter of 2016. Specifically, treated firms, defined as companies that signals their environmental commitment by issuing at least one green bond between 2016 and 2019, experienced lower returns after the Paris Agreement. This behavior reflects investors' willingness to accept lower expected financial performance, driving green firms to trade at a premium relative to brown counterparts due to their perceived lower risk and alignment with sustainability objectives. This result supports the strand of literature analysing the relation between stock returns and environmental performance. \citet{bolton2021investors,bolton2021carbon,bolton2023global} and \cite{bolton2022financial} find that investors demand a carbon risk premium as compensation for their exposure to transition risk, which became large and highly significant only after the Paris Agreement. A similar pattern is observed in the bond markets, where \cite{zerbib2019effect} documents the existence of a green premium with brown bonds trading at a discounted price relative to comparable green bonds. \cite{monasterolo2020blind} show that after the agreement markets adjusted the risk–return profile of low-carbon indices downward, reflecting a reduced risk perception of green assets. Our findings are also consistent with \cite{pastor2021sustainable} and \cite{pedersen2021responsible}, where it is shown that, in equilibrium, investor preferences for sustainability can result in lower expected returns for green assets relative to their brown counterparts.

The remaining sections of the paper are organized as follows. Section 2 provides an overview of the economic motivations behind the study and outlines the main institutional features of the Paris Agreement. Section 3 introduces the proposed methodology. Section 4 describes  a simulation study whereas Section 5 presents the empirical application. Section 6 concludes.

\section{The Paris Agreement}
The Paris Agreement, adopted on 12 December 2015 during COP21 in Paris, is a legally binding international treaty under the United Nations Framework Convention on Climate Change which sets out three overarching goals: limiting the increase in global average temperature to well below 2°C above pre-industrial levels, strengthening the ability to adapt to climate-related risks, and aligning financial flows with pathways consistent with low greenhouse gas emissions and climate-resilient development. Moreover, in contrast to the top-down approach of the Kyoto Protocol, the Paris Agreement introduces a more flexible and inclusive approach based on voluntary commitments where each country submits Nationally Determined Contributions (NDCs) which outline climate targets, policies, and measures and are subject to periodic review under an international transparency framework.

To achieve the goals of the Paris Agreement approximately \$93 trillion over the following 15 years are required (\citealp[]{oecd2017mobilising}), an amount of capital way too high to be only financed by the public sector and that call for the pivotal role of financial market participants in mobilizing investments for the global climate action (\citealp[]{lins2017social}). By supporting the transition to a low-carbon economy, the Paris Agreement increased awareness of climate-related risks (\citealt{bolton2021investors}; \citealp[]{fahmy2022rise}; \citealp[]{hsu2023pollution}). This has had two main effects. First, companies started to reassess their internal policies with a stronger focus on sustainability, increasingly redirecting resources toward green projects. This shift also led many firms to explore financing tools better aligned with their environmental strategies, such as green bonds (\citealp[]{flammer2021corporate}; \citealp[]{bhutta2022green}). Second, investor preferences have been redirected toward climate-conscious investments, incorporating climate risk into their financial strategies (\citealp[]{pedersen2021responsible}, \citealp[]{zerbib2022sustainable}). Investors increasingly seek to allocate capital to stocks of environmentally responsible firms, particularly those demonstrating a clear commitment to sustainability, thereby affecting their market value (\citealp[]{rohleder2022effects}; \citealp{de2023climate}).

In this context, green bonds have become a key signal through which firms communicate the commitment to environmental sustainability. Their structure makes them particularly well-suited to finance climate-related projects, which often involve high upfront costs, long maturities, and inflation-linked cash flows. Alongside these financial features, green bonds also provide a transparent framework that helps reinforcing the credibility of a firm’s environmental strategy. The proceeds are allocated exclusively to projects with clear environmental benefits and are subject to third-party verification to ensure that the funds are used as promised. Since the European Investment Bank issued the first green bond in 2007, the market has expanded rapidly, surpassing \$250 billion by 2019, with issuances increasing fivefold since 2016\footnote{Data are retrieved from https://www.climatebonds.net/resources/reports/green-bonds-global-state-market-2019. The site was accessed on March 12, 2025.}.

\section{The methodology}
\subsection{Setup and assumptions}



We proceed under the potential outcomes framework for causal inference \citep{Rubin1974, Rubin1978} and consider a sample of $N$ units, indexed by $i=1, \dots, N$, drawn from the super-population $\Omega$. These units are repeatedly observed at different times $t=1, \dots, T$, forming a panel dataset. Each unit has a pair of potential outcomes, $Y_{i,t}(Z_{i,t})$, where $Z_{i,t}$ denotes the binary treatment status, taking the value zero if unit $i$ is not treated at time $t$ and one otherwise. At the individual level, only one of the two potential outcomes is observed, specifically the one corresponding to the realized value of $Z_{i,t}$:  
\[
Y_{i,t}(Z_{i,t})= Y_{i,t}(1) \cdot \mathbbm{1}_{Z_{i,t}=1} + Y_{i,t}(0) \cdot \mathbbm{1}_{Z_{i,t}=0}.
\]

Since in our framework the treatment is administered only at the last time $T$, only $Y_{i,t}(0)$ is observed for $t \leq T-1 = T_{0}$, while both $Y_{i,t}(0)$ and $Y_{i,t}(1)$ are potentially observed at time $T$. The potential outcomes approach allows us to define the population causal effect at a given time $t$ as the population mean of the individual differences between the two potential outcomes. Specifically, our goal is to quantify the population causal effect at time  $t=T$ for treated units:

\[
\tau^{ATT}= \mathbb{E}[Y_{i,t}(1) - Y_{i,t}(0) | Z_{i,t}=1].
\]

We outline our set of identifying assumptions.

\begin{assumption} \label{Assumption 1}
    Stable Unit Treatment Value Assumption.
\end{assumption}

This is a standard assumption in causal inference \citep{Rubin1978} by which: the potential outcome for one unit is not affected by the treatment assignment of other units, and the observed outcome for a unit equals the potential outcome for the treatment that the unit actually received. The assumption excludes interference between units (or spillover effects), as well different versions of a treatment.\\

\begin{assumption} \label{Assumption 2}
     Longitudinal factor model.
\end{assumption}

We assume an interactive fixed-effects specification for the outcome model in the pre-treatment period, $t \leq T_{0}$:
\[
Y_{i,t}(0) = \boldsymbol{\lambda}_i^{\prime} \mathbf{F}_t+\xi_{i t},
\]
where $\mathbf{F}_t=\left(F_{1,t} , \ldots, F_{r ,t}\right)^{\prime}$ is a $r$-dimensional vector of time-varying factors with the corresponding loadings $\boldsymbol{\lambda}_i= \left(\lambda_{i,1}, \ldots, \lambda_{i,r}\right)^{\prime}$ for the $i$th unit, and $\xi_{i, t}$ represent i.i.d. idiosyncratic errors. The model can be arranged in matrix notation setting $Y(0)$ for the $T_0 \times N$ matrix of the outcomes, $\boldsymbol{F}=\left(\boldsymbol{F}_1, \boldsymbol{F}_2, \ldots, \boldsymbol{F}_{T_0}\right)^{\prime}$, and $\boldsymbol{\Lambda}=\left(\boldsymbol{\lambda}_1 , \ldots, \boldsymbol{\lambda}_n\right)^{\prime}$.

\begin{assumption} \label{Assumption 3}
    (a) ${\mathbf{F}^{\prime} \mathbf{F}}/{T_{0}}=\mathbf{I}_r$ (b) $\boldsymbol{\Lambda}^{\prime} \boldsymbol{\Lambda}$ is a diagonal matrix with distinct entries.
\end{assumption}

This assumption is commonly used in the identification and estimation of factor models via the principal components method. It requires the diagonal elements of $\boldsymbol{\Lambda}^{\prime}\boldsymbol{\Lambda}$ to be distinct, positive, and arranged in decreasing order \citep{bai2013principal}. This identification restriction, commonly referred to as the PC1 condition, implies that the rotation matrix can be treated as the identity matrix in asymptotic representations, which ensures that the factors and loadings are uniquely determined from their product $\mathbf{F}\boldsymbol{\Lambda}^{\prime}$ thus eliminating rotational indeterminacy. This normalization, in turn, enables a consistent recovery of the latent factor structure using PCA.

\begin{assumption} \label{Assumption 4}
     Latent ignorability.
\end{assumption}

We assume that the source of unobserved confounding is entirely captured by the loadings $\boldsymbol{\lambda}_i$:

\[
\{Y_{i,T}(1), Y_{i,T}(0)\} \perp Z_{i,T} \ | \boldsymbol{\lambda}_i
\]

with $0 < P(Z_{i,T}|\boldsymbol{\lambda}_i) < 1$ which assures randomness in the treatment conditional on
the loadings. The idea of conditioning randomization on the loadings is common in causal models for panel data, as a way to correct for selection on unobservables\footnote{Alternative causal approaches for addressing selection on unobservables, such as Regression Discontinuity Design or the Instrumental Variable method, do not impose forms of randomization conditional on the loadings; however, they are significantly more limited in terms of external validity.} \citep{Xu2023}. 
In particular, compared to standard approaches like TWFE or DID, Assumption \ref{Assumption 4}, when combined with a factor-interactive fixed effects outcome model (Assumption \ref{Assumption 2}) does not impose parallel trends for the potential outcomes $Y_{i,t}(0)$. Latent ignorability is inherently untestable since it involves potential outcomes that can never be observed simultaneously. However, as with the classical unconfoundedness assumption, its plausibility can be assessed by examining the balance of the weighted distribution of the confounders (the loadings) across treatment groups.

\vspace{0.35cm}{

Assumptions \ref{Assumption 1}-\ref{Assumption 4} describe a causal scenario where the treatment is assigned only once to each unit, ruling out staggered or sequential adoption of the treatment. Thus, the design follows a standard causal inference framework under treatment ignorability, for which several methods have been proposed in the literature including regression, matching, and propensity score-based approaches. Since our assignment rule is ignorable according to Assumption \ref{Assumption 4}, the aforementioned methods could be applied once the loadings are estimated. However, the main challenge remains quantifying the large-sample variances of these estimators, including the additional uncertainty arising from the estimation of the loadings. For this purpose, we propose to adopt an Inverse on the Propensity score Weighting (IPW) estimator\footnote{A Propensity Score matching estimator for a similar framework was introduced by \cite{gobillon2016regional}, but the authors did not provide a method for quantifying the variance.}, where the propensity score depends on the factor loadings, as its large-sample properties are effectively derived using the M-estimation theory \citep{lunceford2004stratification}.}

\subsection{Estimation strategy}

We follow a three-step procedure. In the first step, we estimate the factor loadings, $\hat{\boldsymbol{\Lambda}}$, from the potential outcome model in the pre-treatment period using Principal Component Analysis (PCA). Under standard assumptions for approximate factor models (\citealp[]{bai2013principal}), we minimize the following  nonlinear least squares problem:

\begin{equation}
\begin{aligned}
\min _{F, \Lambda} \sum_{i=1}^N \sum_{t=1}^{T_0}\left(Y_{i,t}(0)-\boldsymbol{\lambda}_i^{\prime} \boldsymbol{F}_t\right)^2 
\end{aligned}     
\end{equation}

subject to $\mathbf{F}^{\prime} \mathbf{F}/T_{0}=\mathbf{I}_r$ and $\boldsymbol{\Lambda}^{\prime} \boldsymbol{\Lambda}$  is a diagonal matrix with distinct entries. The second step estimates the propensity score, $e_i$, using a logistic regression where the estimated factor loadings \( \hat{\boldsymbol{\lambda}}_i \), obtained via PCA, serve as covariates. In particular, the propensity score is given by:

\begin{equation}
\begin{aligned}
    e_i(\hat{\boldsymbol{\lambda}}_i) = \frac{1}{1 + \exp(-\hat{\boldsymbol{\lambda}}_i' \boldsymbol{\beta})},
\end{aligned}
\end{equation}
where the parameter vector \( \boldsymbol{\beta} \) of dimension \( r \times 1 \) is obtained through Maximum Likelihood Estimation (MLE). In the third step, we compute the average treatment effect on the treated (ATT). To estimate the causal effect on the treated population, we employ the Hájek estimator based on propensity score weighting, which helps reduce the variance and provides stable estimates (\citealp[]{hirano2003efficient}). Specifically, the ATT is computed as:

\begin{equation}
\begin{aligned}
    \hat{\tau}^{ATT} = \hat{\tau}_1 - \hat{\tau}_0 = \frac{\sum_{i=1}^{N}Z_{i,T}\,Y_{i,T}}{\sum_{i=1}^{N}Z_{i,T}} - \frac{\sum_{i=1}^{N} \,(1-Z_{i,T})\,Y_{i,T}\,(\hat{e}_{i}/(1-\hat{e}_{i})}{\sum_{i=1}^{N} \,(1-Z_{i,T})(\hat{e}_{i}/(1-\hat{e}_{i})}.
\end{aligned}
\label{eq: ATT}
\end{equation}

In Theorem \ref{thm1} we derive a closed-form expression for the asymptotic variance of the ATT estimator through M-estimation theory. This approach provides a unified framework that accounts for the estimation uncertainty originating from the PCA-based factor loadings, the logistic regression estimates, and the Hájek weighting scheme, thereby delivering robust standard errors and confidence intervals for the ATT. Moreover, the M-estimation framework provides a direct and efficient solution for inference, avoiding computationally intensive resampling methods such as bootstrap and jackknife.

\begin{theorem}[]\label{thm1}

Under Assumptions \ref{Assumption 1}-\ref{Assumption 4} and considering standard assumptions for approximate
factor models (\citealp[]{bai2013principal}), assuming a logistic propensity score, for $N, T_{0} \rightarrow \infty$ with $\sqrt{N}/T_{0} \rightarrow 0$, 

$$
\begin{aligned}
\sqrt{N}\left(\hat{\tau}^{ATT}-\tau^{ATT}\right) \xrightarrow{d} N\left(0,\mathbb{V}^{ATT}\right),
\end{aligned}
$$

where 

\begin{equation}
\begin{aligned}
    \mathbb{V}^{ATT} = N^{-2} \sum_{i=1}^{N} \mathcal{I}_{i}^2 + o_p(1),
\end{aligned}
\label{eq: variance ATT}
\end{equation}

and 

$$
\footnotesize{
\begin{aligned}
\mathcal{I}_{i} = \,\eta_{1}^{-1} \,U_{i} \left(\tau_1\right)-\eta_{2}^{-1}\, U_{i}\left(\tau_0,\boldsymbol{\beta}\right) &- \eta_{2}^{-1} \boldsymbol{H}_\beta^T \boldsymbol{E}_{\beta \beta}^{-1} \left(
\boldsymbol{S}_i(\boldsymbol{\beta},\boldsymbol{\lambda}_i) + \frac{1}{\sqrt{T_{0}}} \frac{\partial \boldsymbol{S}_{i}\left(\boldsymbol{\beta},\boldsymbol{\lambda}_i \right)}{\partial \boldsymbol{\lambda}}\left(\frac{\boldsymbol{F}^{\prime} \boldsymbol{\boldsymbol{F}}}{T_{0}}\right)^{-1} \frac{1}{\sqrt{T_{0}}} \sum_{t=1}^{T_{0}} \boldsymbol{F}_t e_{i t} \right),\\
&\eta_{1} = \mathbb{E}\bigl[Z_{i,T}\bigr],
    \; U_{i}\left({\tau}_1\right)= Z_{i,T}\left(Y_{i,T}-{\tau}_1\right),\\ \;\eta_{2} = \;\mathbb{E}\bigl[\frac{{e}_{i}}{1-{e}_{i}}\,&(\,Z_{i,T}-1)\bigr], \,  U_{i}\left({\tau}_0, {\boldsymbol{\beta}}\right) = \,(1-Z_{i,T})\,({e}_{i}/(1-{e}_{i})(\,Y_{i,T}-{\tau_0}), \\
    \boldsymbol{H}_\beta & = \,\mathbb{E}\left((1-Z_{i,T}) \,\frac{\partial e_{i}}{\partial \boldsymbol{\beta}} \,\frac{1}{(1-{e}_i)^2} \,(\,Y_{i,T}-{\tau}_0) \right),\\
\sqrt{n}(\hat{\boldsymbol{\beta}}-\boldsymbol{\beta}) =\small{-\boldsymbol{E}_{\beta \beta}^{-1}}
\small{\left( \frac{1}{\sqrt{N}}\right.} &{\left.\sum_{i=1}^{N}\boldsymbol{S}_i(\boldsymbol{\beta},\boldsymbol{\lambda}_i) + \frac{1}{\sqrt{N T_{0}}} \sum_{i=1}^{N} \frac{\partial \boldsymbol{S}_{i}\left(\boldsymbol{\beta},\boldsymbol{\lambda}_i \right)}{\partial \boldsymbol{\lambda}}\left(\frac{\boldsymbol{F}^{\prime} \boldsymbol{\boldsymbol{F}}}{T_{0}}\right)^{-1} \frac{1}{\sqrt{T_{0}}} \sum_{t=1}^{T_{0}} \boldsymbol{F}_t e_{i t} \right) + o_p(1)},
\end{aligned}}
$$

with $\boldsymbol{S}_i\left(\boldsymbol{\beta},\boldsymbol{{\lambda}_i}\right)$ denoting the score function for the propensity score, while $\boldsymbol{E}_{\beta \beta} = \mathbb{E}[\frac{\partial \boldsymbol{S}_{i}\left(\boldsymbol{\beta},\boldsymbol{\lambda}_i \right)}{\partial \boldsymbol{\beta}}]$ is the information matrices for the propensity score.
\end{theorem}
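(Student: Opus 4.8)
\emph{Proof plan.} The strategy is to cast the three-step procedure as a stacked system of estimating equations and then propagate estimation uncertainty through a sequential M-estimation argument, treating the PCA-estimated loadings as generated regressors. Write the target parameters as $\boldsymbol{\theta}=(\tau_1,\tau_0,\boldsymbol{\beta}')'$ and observe that $\hat{\tau}_1$ solves $\sum_i Z_{i,T}(Y_{i,T}-\tau_1)=0$, that $\hat{\boldsymbol{\beta}}$ solves the logistic score equation $\sum_i \boldsymbol{S}_i(\boldsymbol{\beta},\hat{\boldsymbol{\lambda}}_i)=0$, and that $\hat{\tau}_0$ solves $\sum_i (1-Z_{i,T})\frac{\hat e_i}{1-\hat e_i}(Y_{i,T}-\tau_0)=0$. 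Assumption \ref{Assumption 1} guarantees that the observed $Y_{i,T}$ in these equations equals the relevant potential outcome, while Assumption \ref{Assumption 4} is what makes the weighted control equation identify $\tau_0=\mathbb{E}[Y_{i,T}(0)\mid Z_{i,T}=1]$, so that $\tau^{ATT}=\tau_1-\tau_0$ is the estimand. The first two moment conditions are exactly those analyzed for IPW by \citet{lunceford2004stratification}; the novelty is that $\boldsymbol{S}_i$ and the weights depend on $\boldsymbol{\lambda}_i$, which is itself estimated.

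\emph{The two outer components.} The equation for $\hat{\tau}_1$ involves no nuisance parameters, so a first-order expansion around $\tau_1$ gives immediately $\sqrt{N}(\hat{\tau}_1-\tau_1)=\eta_1^{-1}\,N^{-1/2}\sum_i U_i(\tau_1)+o_p(1)$, with $\eta_1=\mathbb{E}[Z_{i,T}]$ the probability limit of the Jacobian in $\tau_1$. For $\hat{\tau}_0$ I would Taylor-expand its estimating equation jointly in $(\tau_0,\boldsymbol{\beta})$ and in the loadings. The derivative in $\tau_0$ yields $\eta_2$; the derivative in $\boldsymbol{\beta}$, evaluated at the truth and taken in expectation, yields $\boldsymbol{H}_\beta$; the direct derivative in $\boldsymbol{\lambda}_i$ turns out to be asymptotically negligible, because each $\hat{\boldsymbol{\lambda}}_i-\boldsymbol{\lambda}_i=O_p(T_0^{-1/2})$ enters through mean-zero, cross-sectionally independent terms that, once averaged as $N^{-1/2}\sum_i(\cdot)$, remain $O_p(T_0^{-1/2})=o_p(1)$. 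This produces $\sqrt{N}(\hat{\tau}_0-\tau_0)=\eta_2^{-1}\big[N^{-1/2}\sum_i U_i(\tau_0,\boldsymbol{\beta})+\boldsymbol{H}_\beta'\,\sqrt{N}(\hat{\boldsymbol{\beta}}-\boldsymbol{\beta})\big]+o_p(1)$, so that everything reduces to obtaining an asymptotically linear representation of $\hat{\boldsymbol{\beta}}$.

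\emph{The generated-regressor step.} This is the crux. Expanding the logistic first-order condition gives $\sqrt{N}(\hat{\boldsymbol{\beta}}-\boldsymbol{\beta})=-\boldsymbol{E}_{\beta\beta}^{-1}N^{-1/2}\sum_i \boldsymbol{S}_i(\boldsymbol{\beta},\hat{\boldsymbol{\lambda}}_i)+o_p(1)$, and the task is to replace $\hat{\boldsymbol{\lambda}}_i$ by $\boldsymbol{\lambda}_i$. A further expansion in the loadings gives $\boldsymbol{S}_i(\boldsymbol{\beta},\hat{\boldsymbol{\lambda}}_i)=\boldsymbol{S}_i(\boldsymbol{\beta},\boldsymbol{\lambda}_i)+\frac{\partial \boldsymbol{S}_i}{\partial\boldsymbol{\lambda}}(\hat{\boldsymbol{\lambda}}_i-\boldsymbol{\lambda}_i)+\text{(remainder)}$, into which I substitute the asymptotic representation of the PCA estimator from \citet{bai2013principal}, namely $\hat{\boldsymbol{\lambda}}_i-\boldsymbol{\lambda}_i=(\boldsymbol{F}'\boldsymbol{F}/T_0)^{-1}T_0^{-1}\sum_t \boldsymbol{F}_t e_{it}+\text{bias}$. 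The PC1 normalization of Assumption \ref{Assumption 3} is essential here: it fixes the rotation matrix at the identity, so this representation holds directly for $\hat{\boldsymbol{\lambda}}_i$ rather than for a rotated version, and the factors and loadings are separately identified. Substituting the leading term reproduces the loading-correction expression displayed in the theorem. Conditionally on $\boldsymbol{F}$, the idiosyncratic errors $e_{it}$ are independent across $i$ and independent of $Z_{i,T}$ by Assumption \ref{Assumption 4}, so $\frac{\partial \boldsymbol{S}_i}{\partial\boldsymbol{\lambda}}$ and $T_0^{-1/2}\sum_t \boldsymbol{F}_t e_{it}$ are conditionally uncorrelated with the latter mean zero; the aggregated variance term $N^{-1/2}(NT_0)^{-1/2}\sum_i(\cdot)=O_p(T_0^{-1/2})$ is therefore $o_p(1)$, while the \emph{bias} part of the expansion, being $O_p(T_0^{-1})$ with a non-vanishing cross-sectional mean, contributes a term of order $\sqrt{N}\,T_0^{-1}=\sqrt{N}/T_0$, which vanishes precisely under the stated rate condition $\sqrt{N}/T_0\to0$.

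\emph{Assembly and main difficulty.} Collecting the three representations, substituting the one for $\hat{\boldsymbol{\beta}}$ into that for $\hat{\tau}_0$ and differencing yields $\sqrt{N}(\hat{\tau}^{ATT}-\tau^{ATT})=N^{-1/2}\sum_i \mathcal{I}_i+o_p(1)$ with $\mathcal{I}_i$ as defined; a Lindeberg--Feller central limit theorem, applied conditionally on $\boldsymbol{F}$ to accommodate the common-factor dependence and then integrated, delivers the asymptotic normality, and $\mathbb{V}^{ATT}=N^{-2}\sum_i \mathcal{I}_i^2$ follows from a law-of-large-numbers argument for the sandwich variance after replacing the unknowns by consistent estimates. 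I expect the main obstacle to be the generated-regressor step: rigorously justifying the uniform-in-$i$ validity of the Bai expansion for the loadings, controlling the remainder and higher-order terms so that only the displayed leading correction survives, and verifying that the cross-sectional dependence induced by jointly estimating the common factors $\boldsymbol{F}$ is confined to the bias block that the rate condition $\sqrt{N}/T_0\to0$ annihilates. The bookkeeping of which loading-driven terms enter the limiting variance (none) versus which merely require the rate condition to be discarded (the PCA bias) is where the argument is most delicate.
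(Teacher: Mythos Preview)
Your proposal is correct and follows essentially the same sequential M-estimation route as the paper: the paper likewise proves a lemma giving the asymptotic linear representation of $\sqrt{N}(\hat{\boldsymbol{\beta}}-\boldsymbol{\beta})$ by Taylor-expanding the logistic score in $(\boldsymbol{\beta},\boldsymbol{\lambda})$ and substituting the \citet{bai2013principal} expansion for $\hat{\boldsymbol{\lambda}}_i-\boldsymbol{\lambda}_i$, then expands the estimating equations for $\hat{\tau}_1$ and $\hat{\tau}_0$ around $(\tau_1)$ and $(\tau_0,\boldsymbol{\beta})$ respectively, invokes Slutsky to replace Jacobians by $\eta_1$, $\eta_2$, $\boldsymbol{H}_\beta$, and combines. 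If anything, you are more explicit than the paper on two points it leaves implicit: the negligibility of the \emph{direct} $\boldsymbol{\lambda}$-derivative in the $\tau_0$ equation, and the role of the rate condition $\sqrt{N}/T_0\to 0$ in killing the PCA bias block.
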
 

A formal proof is given in the Appendix \ref{sec12}.

\section{Simulation study}
This section illustrates the performance of our approach in two simulation frameworks where we consider a panel data of $N=500$ units and $T=100$ time points, and the policy intervention is implemented at the last time $T$. For both the simulation experiments, we set the number of time-varying factors equal to $r=3$, and simulate $\mathbf{F}_t$ from an AR(1) specification $\mathbf{F}_t = \mathbf{\Phi} \mathbf{F}_{t-1}+ \boldsymbol{\eta}_t$, where $\mathbf{\Phi}$ is a $3 \times 3$ diagonal matrix with the entries equal to $0.5$. In order to satisfy Assumption \ref{Assumption 3}, we simulate the loading matrix
\[
\boldsymbol{\Lambda} =
\begin{bmatrix}
\boldsymbol{\Lambda}_1 & 0 & 0 \\
0 & \boldsymbol{\Lambda}_2 & 0 \\
0 & 0 & \boldsymbol{\Lambda}_3 \\
\end{bmatrix}.
\]

The first factor affects only the first $N_1$ units by the loadings $\boldsymbol{\Lambda}_1$, the second factor only the next $N_2$ units by the loadings $\boldsymbol{\Lambda}_2$, and the third factor the remaining $N_3=N-N_1-N_2$ by the loadings $\boldsymbol{\Lambda}_3$.
In the two frameworks, each of the three diagonal entries of $\boldsymbol{\Lambda}'\boldsymbol{\Lambda}$ is either less (Case1) or strictly greater (Case2) than one. For each simulation, we determine the sample sizes $N_1$, $N_2$, and $N_3$ by drawing from a trinomial distribution with parameters $p_1 = p_2 = \frac{1}{3}$. The loadings are drawn from independent standard normal distributions, $N(0,1)$, and the resulting $N \times 3$ matrix is then multiplied by a $3 \times 3$ diagonal matrix whose diagonal entries are set to $(1, 0.875, 0.75)$ for Case 1 and $(2.25, 2, 1.75)$ for Case 2. This ensures that Case 2 exhibits greater variability in the loading distributions. 
The treatment assignment indicator, $Z_i$, is generated from a logistic model $P(Z_i=1|\boldsymbol{\lambda}_i)=\exp(\boldsymbol{\lambda}_i'\boldsymbol{\beta})/(1+\exp(\boldsymbol{\lambda}_i'\boldsymbol{\beta}))$.

We simulate $Y_{i,t}(0)$ for all units at all time points $t=1,..,T_0$ from the factor model $Y_{i,t}(0) = \boldsymbol{\lambda}_i^{\prime} \mathbf{F}_t+\xi_{0,i,t}$ where $\xi_{0,i,t}\sim N(0,1)$. For treated units at time $T=100$, the outcome $Y_{i,100}(1)$ is simulated as $Y_{i,100}(1) = \boldsymbol{\lambda}_i^{\prime} \mathbf{F}_{100}+\tau_{i}+\xi_{1,i,t}$, where $\tau_{i}=\tau_{}^{ATT}+u_i$, with $\tau_{}^{ATT}$ set to $2$ in each simulation settings. Additionaly $u_{i}\sim N(0,1)$ and $\xi_{1,i,t}\sim N(0,1)$.

For each of the two simulation cases, we consider four alternative scenarios in which the coefficient vectors $\boldsymbol{\beta}$ for the propensity score are selected to achieve varying degrees of balance in the loading distributions between treated and control groups, while ensuring that the proportion of treated units remains within the range of $15\%–20\%$. The values of $\boldsymbol{\beta}$ are reported in the second column of Tables \ref{Res_Sim_ATT} and \ref{Res_Sim_beta_loadings}. Figure \ref{fig:scenarios_1} presents boxplots of the Absolute Standardized Difference (ASD) of the loadings between treated and control units, based on 1,000 simulated samples, for both the original  (unweighted) and weighted (on the propensity score) samples under Case 1. ASD is the absolute difference in the means of the weighted loading between the treatment and control groups divided by square root of the sum of within group variances:

\begin{equation}
\text{ASD} = {\left|\frac{\sum_{i=1}^N \lambda_i Z_i w_i}{\sum_{i=1}^N Z_i w_i} - \frac{\sum_{i=1}^N \lambda_i (1- Z_i) w_i}{\sum_{i=1}^N (1-Z_i) w_i}\right|}\Bigg /{\sqrt{s_{1}^2/N_1 + s_{0}^2/N_0}},
\label{eq:std_bias}
\end{equation}

where $N_z$ is the number of units and $s_z^2$ is the standard deviation of the unweighted loading in group $Z=z$ for $z=0,1$. For the original data, $w_i=1$ for each unit and ASD is the standard two-sample t-statistic; for the weighted samples, $\omega_i$ are the ATT weights: $w_i=1$ for $Z_i=1$,  $w_i=e_i/(1-e_i)$ for $Z_i=0$. The figure highlights a high degree of imbalance in the loadings between treated and control groups for the original samples in the first scenario. This imbalance progressively decreases across the scenarios, reaching a good level of balance for all three loadings in the fourth scenario. In contrast, the boxplots for the weighted samples demonstrate a significantly higher degree of balance, with the average ASD consistently remaining below the threshold of 1.96 (the critical value of the two-sample \textit{t}-statistic at 0.05 level). The corresponding boxplots for Case 2 are shown in Figure \ref{fig:scenarios_2}. Due to the higher variability in the loading distributions, the original samples exhibit worse balance in all situations, while the weighted samples show an imbalance in the first scenario, where the ASD for $\lambda_3$ exceeds the critical threshold of 1.96. 
\begin{figure}[ht!]
  \centering
  \subfloat
  {\includegraphics[height=5cm, width=7cm]{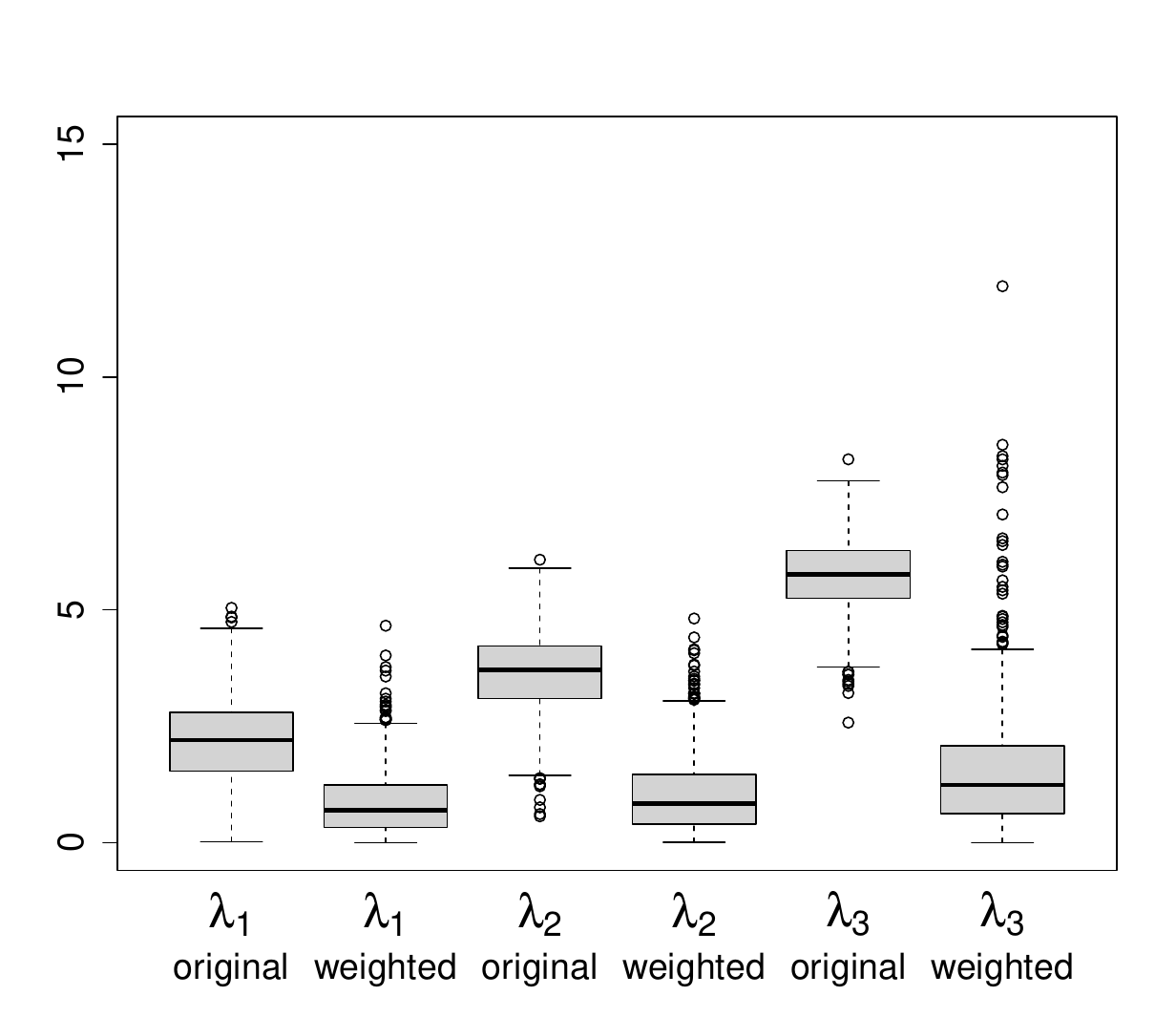}\label{Dens-11}}
  \subfloat
  {\includegraphics[height=5cm, width=7cm]{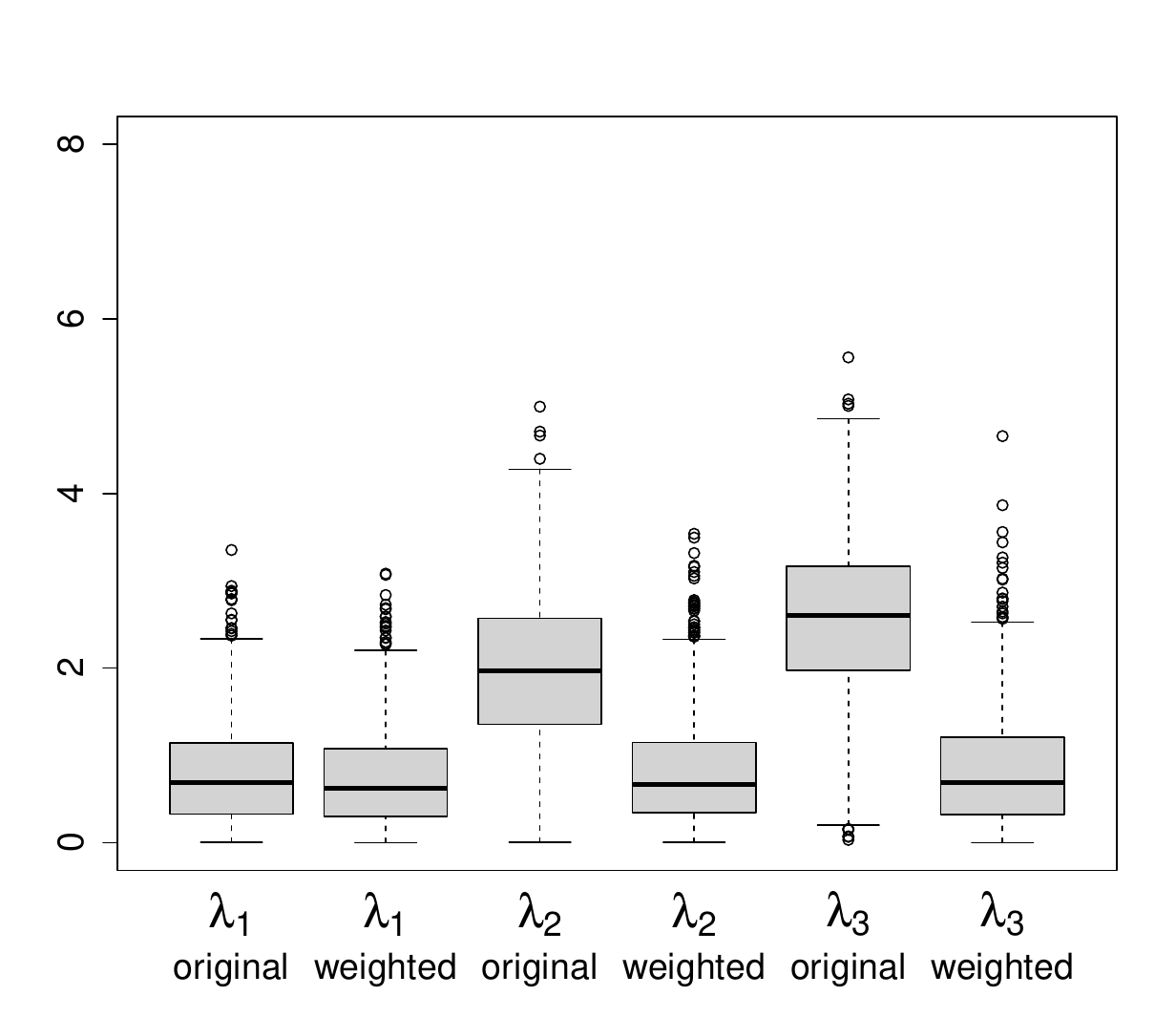}\label{Dens-12}}
  \par
  \subfloat
  {\includegraphics[height=5cm, width=7cm]{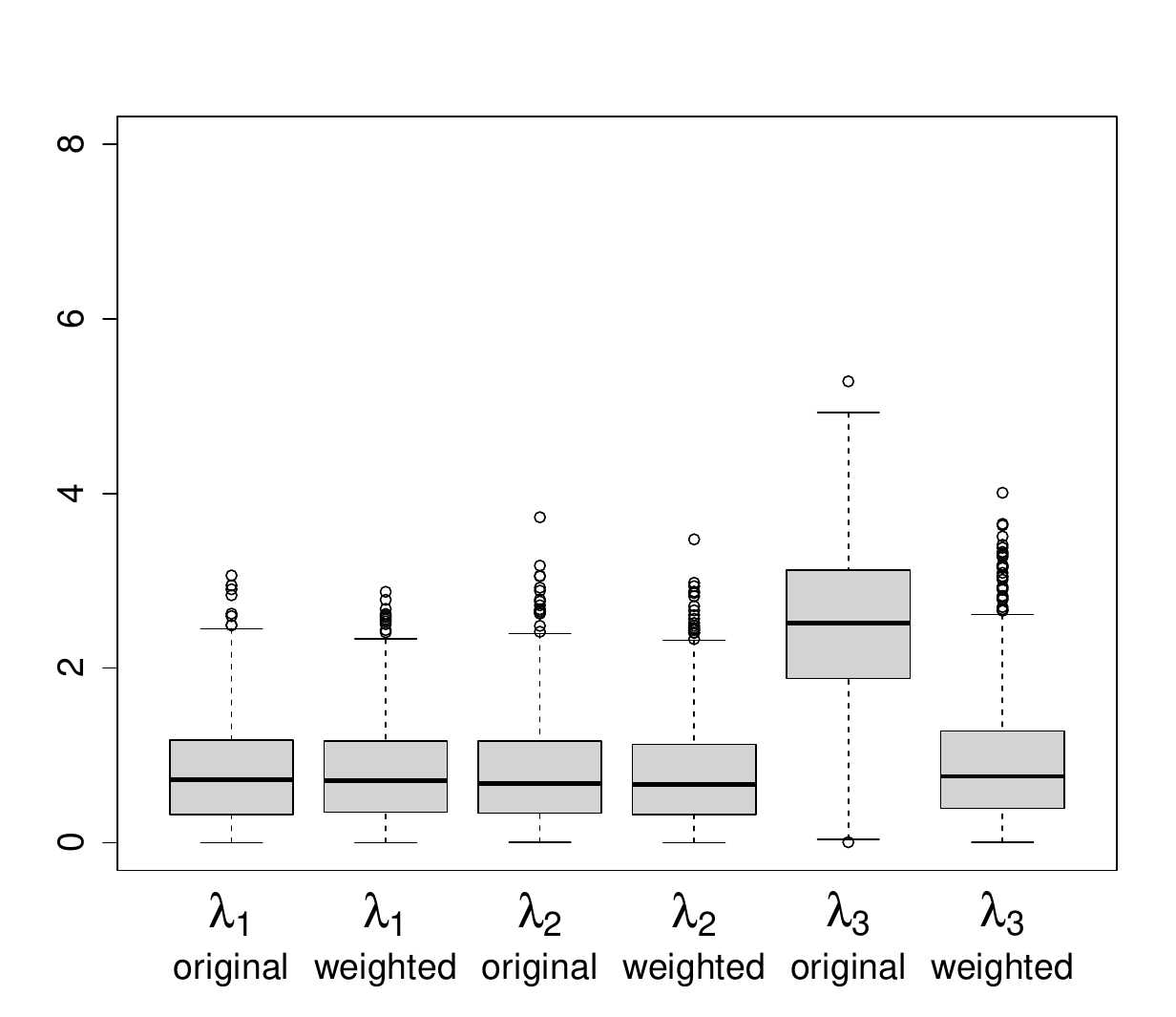}\label{Dens-13}}
  \subfloat
  {\includegraphics[height=5cm, width=7cm]{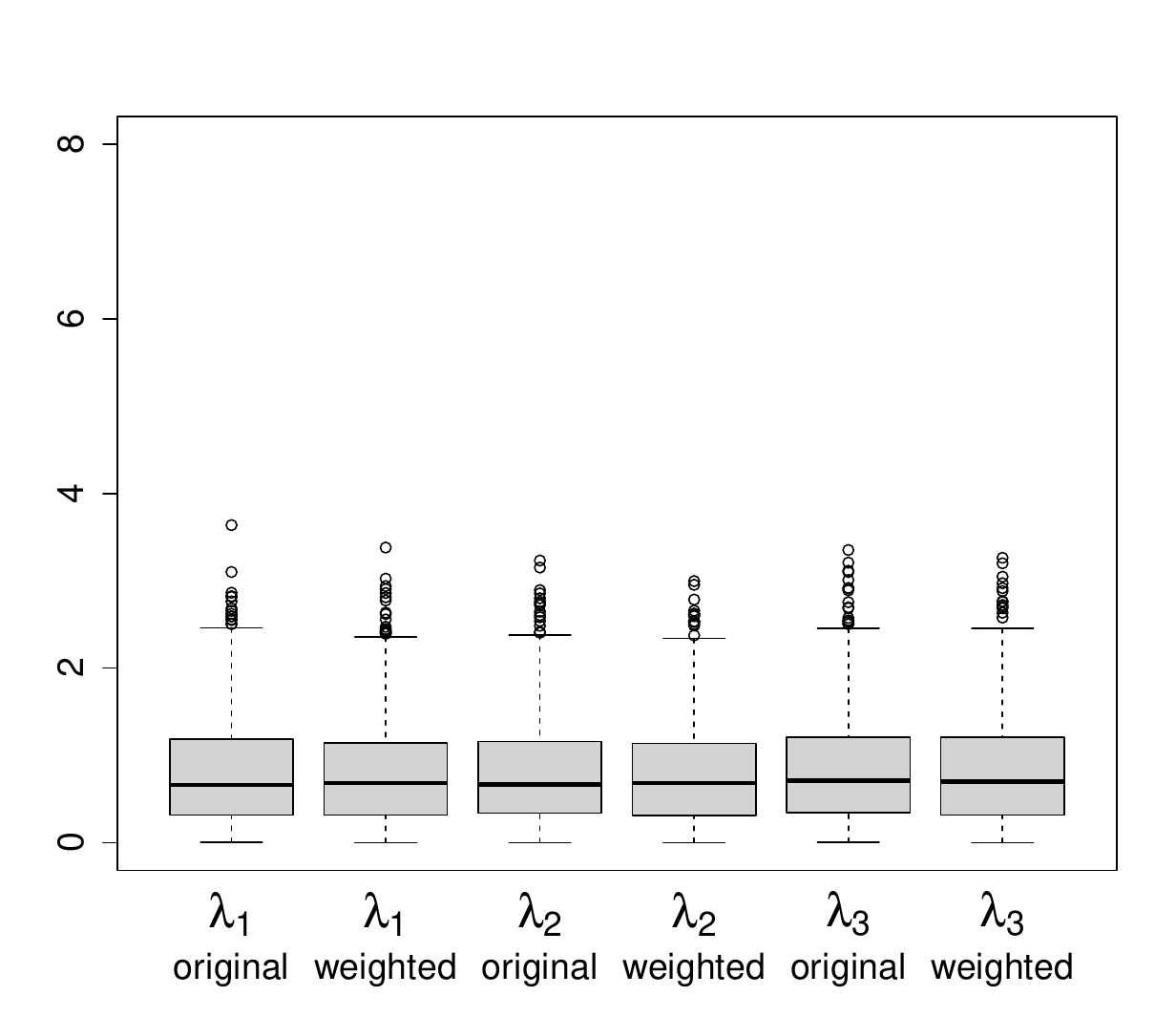}\label{Dens-31}}
  \caption{Boxplots of the absolute standardized difference, between treated and control units, for the original and the weighted samples for each loading, for the proposed scenarios from 1000 simulated samples; Case1.}
  \label{fig:scenarios_1}
\end{figure}

\begin{figure}[ht!]
  \centering
  \subfloat
  {\includegraphics[height=5cm, width=7cm]{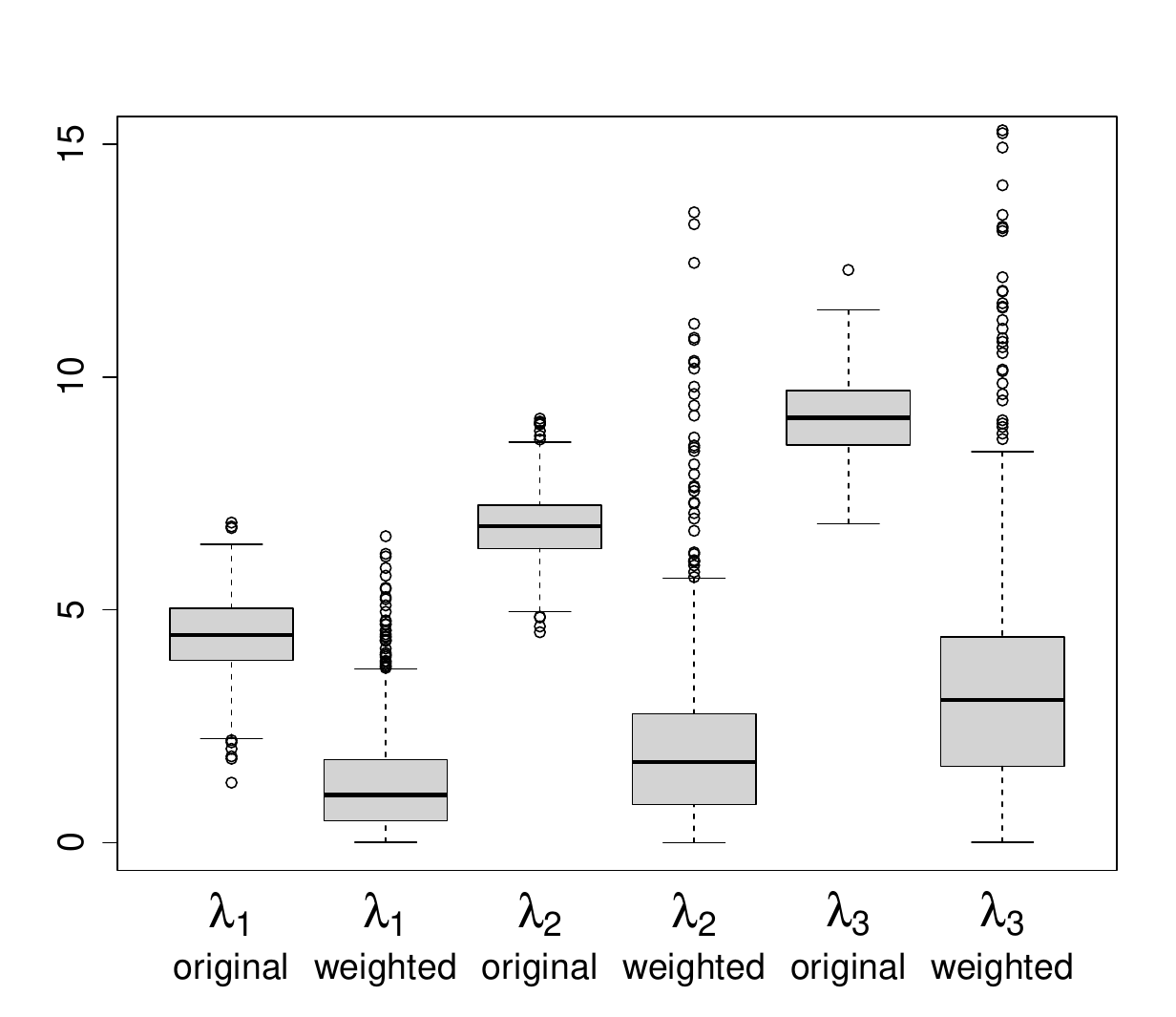}\label{Dens-11}}
  \subfloat
  {\includegraphics[height=5cm, width=7cm]{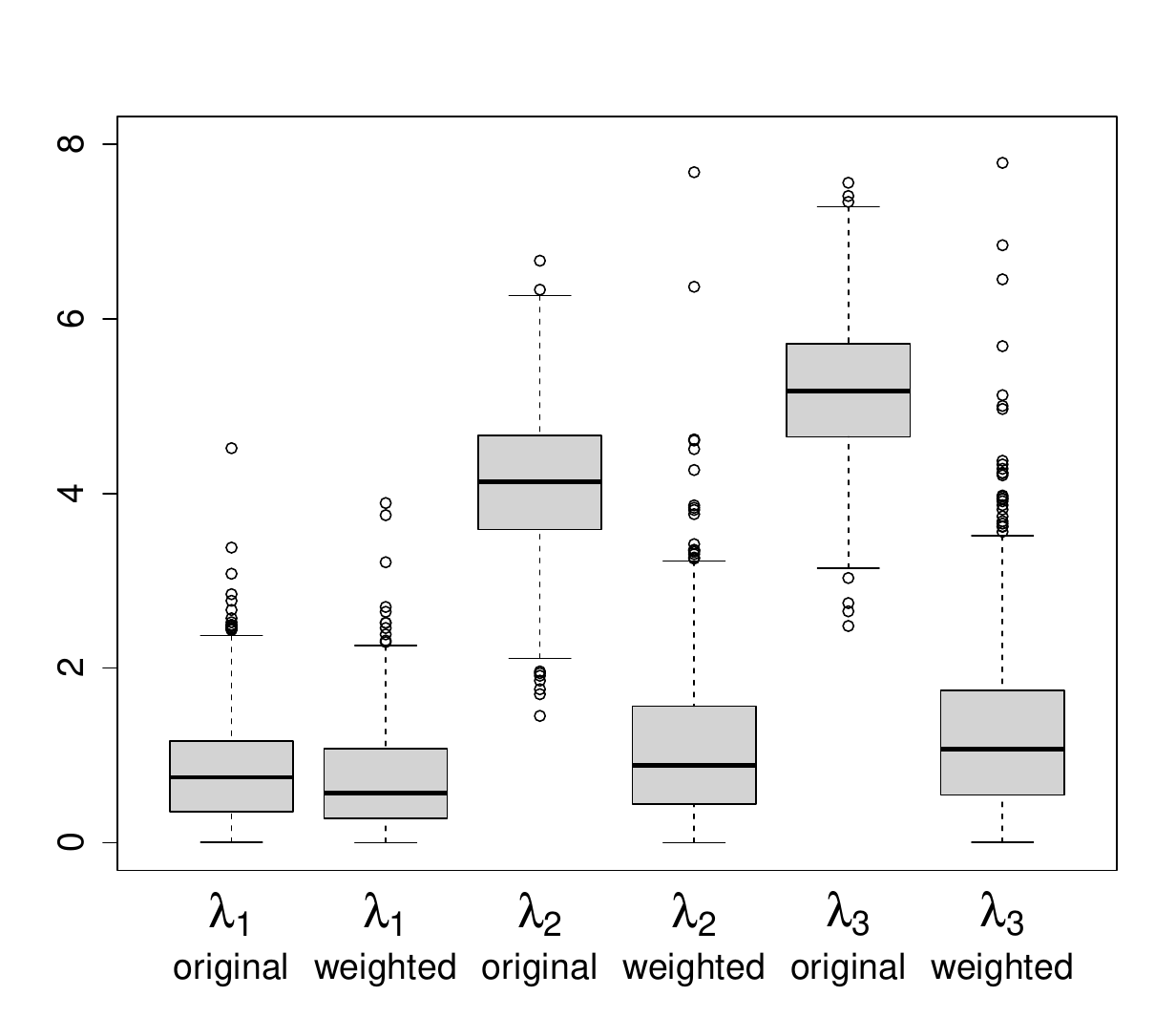}\label{Dens-12}}
  \par
  \subfloat
  {\includegraphics[height=5cm, width=7cm]{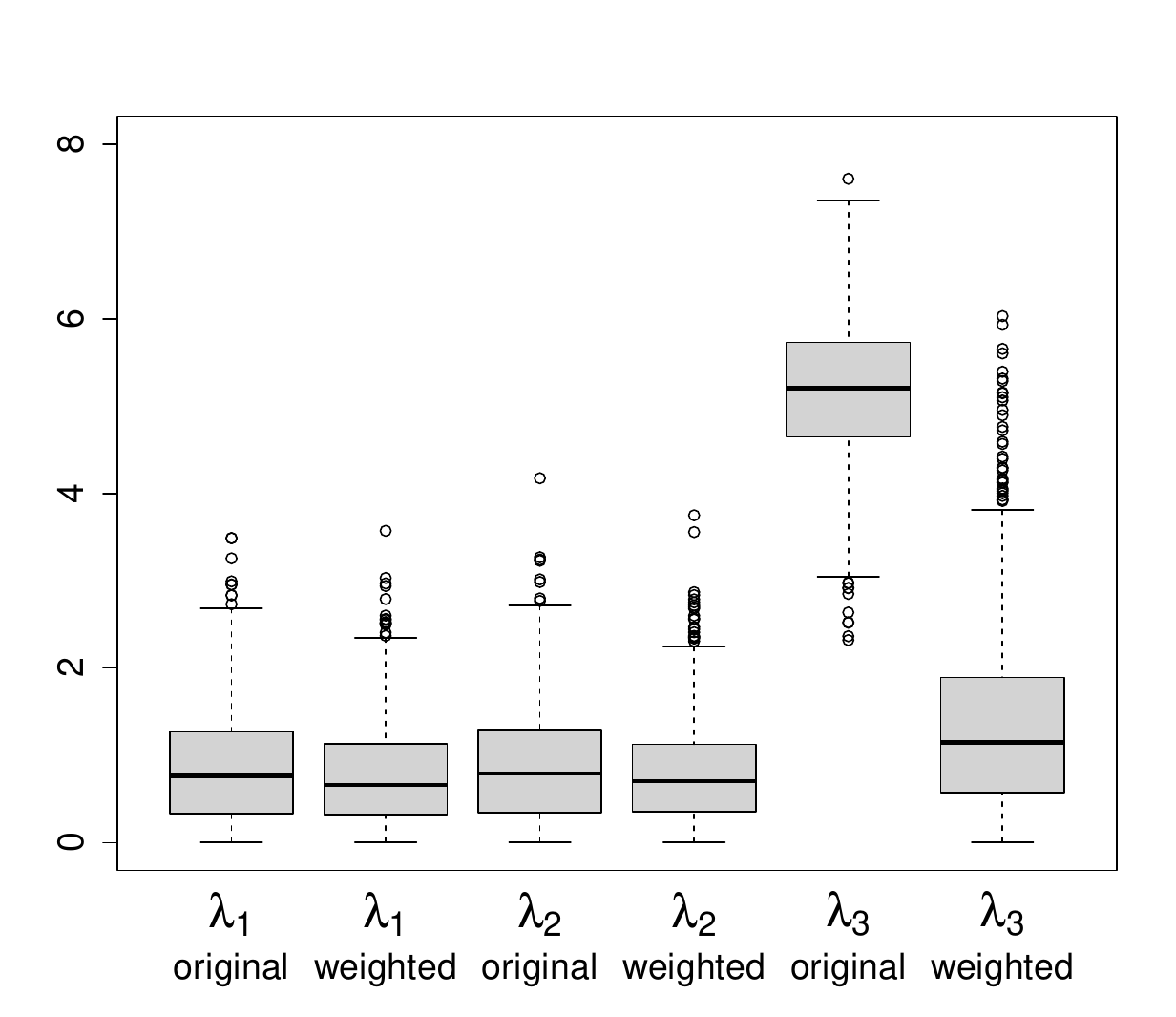}\label{Dens-13}}
  \subfloat
  {\includegraphics[height=5cm, width=7cm]{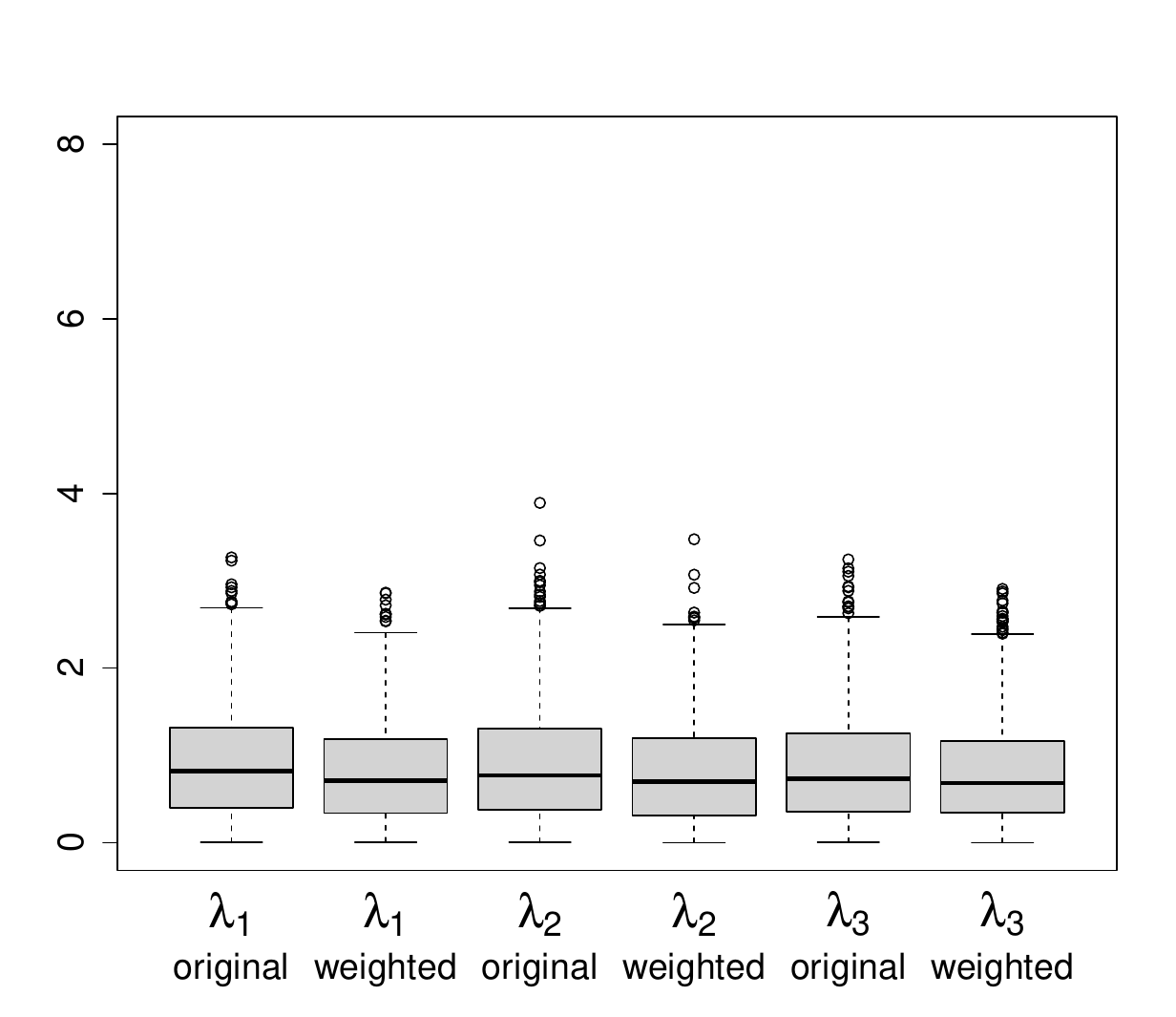}\label{Dens-31}} 
  \caption{Boxplots of the absolute standardized difference, between treated and control units, for the original and the weighted samples for each loading, for the proposed scenarios from 1000 simulated samples; Case2.}
  \label{fig:scenarios_2}
\end{figure}

\subsection{Simulation results}

To assess the performance of the proposed Inverse on the Propernsity score Weighting estimator (IPW), Table \ref{Res_Sim_ATT} reports the Root Mean Squared Error (RMSE) computed with respect to the true value of the ATT, $\tau^{ATT}_{100}$, which is set to 2 at time $T=100$ across all estimation scenarios, $
\text{RMSE}=\sqrt{\frac{1}{N_{rep}} \sum_{k=1}^{N_{rep}}(\hat{\tau}^{ATT}_{100,k}-\tau^{ATT}_{100})^2}$,
where $\hat{\tau}^{ATT}_{100,k}$ denotes the estimated ATT in replication $k$, and the number of replications $N_{rep}=1000$.
To evaluate the calibration of the point estimates, we also report the coverage of the $95\%$ confidence intervals (CI). The IPW approach achieves low RMSE values and nominal $95\%$ CI coverage (i.e., around 0.95) in most of the scenarios of both cases, except for the first scenario of Case 2, which is characterized by a high degree of imbalance in the loadings as well as a high level of variability. We compare our results with those obtained using the Generalized Synthetic Control (GSC) method \citep{xu2017generalized}, which has gained widespread popularity in applications of causal inference with panel data for policy evaluation. 
The relative increase in RMSE values obtained applying our methodology compared to \cite{xu2017generalized} is due to the fact that we account for the randomness introduced by sampling from the super-population instead of relying on finite-sample inference. Therefore, the performance of the IPW-based approach appears comparatively satisfactory based on these simulation results.
Table \ref{Res_Sim_beta_loadings} presents a single  RMSE computed jointly for the three loadings, as well as the RMSE for each estimated coefficient $\beta_j$ of the propensity score, respectively:
\[
\ \ \ \sqrt{\frac{1}{N_{rep}} \sum_{k=1}^{N_{rep}} \sum_{i=1}^{N}\sum_{r=1}^{3} (\hat{\lambda}_{r,i,k}-\lambda_{r,i})^2};\;\;\; \sqrt{\frac{1}{N_{rep}} \sum_{k=1}^{N_{rep}}(\hat{\beta}_{j,k}-\beta_j)^2}, \ j:0,1,2,3. 
\]
The results are similar across the two cases in terms of the $\beta_j$ coefficients, while the loadings are estimated with lower accuracy in Case 2. This reduced accuracy is likely responsible for the higher RMSE observed in Case 2 for ATT estimation, as shown in Table \ref{Res_Sim_ATT}.

\begin{table}[!th]
\small{
\begin{center}
\begin{tabular}
{c|l| @{\hspace{1\tabcolsep}} c @{\hspace{1\tabcolsep}} |c @{\hspace{1\tabcolsep}} |c| @{\hspace{1\tabcolsep}} c}
\hline
 & & \multicolumn{2}{c|}{IPW} & \multicolumn{2}{c}{GSC}\\
\hline
\text{Case} & \text{Scenario} & RMSE & Coverage & RMSE & Coverage \\
& & & of 95\% CI & & of 95\% CI\\
\hline
& $\beta_0=-1.75, \beta_1=0.5, \beta_2=1, \beta_3=2$ & 0.286 & 0.927 & 0.228 & 0.939 \\
 1 & $\beta_0=-1.75, \beta_1=0.05, \beta_2=0.5, \beta_3=0.75$ & 0.255 & 0.953 & 0.238 & 0.950 \\
& $\beta_0=-1.75, \beta_1=0.05, \beta_2=0.05, \beta_3=0.75$ & 0.245 & 0.956 & 0.230 & 0.950 \\
& $\beta_0=-1.75, \beta_1=0.05, \beta_2=0.05, \beta_3=0.05$ & 0.248 & 0.955 & 0.233 & 0.958 \\
\hline
& $\beta_0=-1.75, \beta_1=0.5, \beta_2=1, \beta_3=2$ & 0.682 & 0.799 & 0.240 & 0.909 \\
 2 & $\beta_0=-1.75, \beta_1=0.05, \beta_2=0.5, \beta_3=0.75$ & 0.323 & 0.922 & 0.231 & 0.938 \\
& $\beta_0=-1.75, \beta_1=0.05, \beta_2=0.05, \beta_3=0.75$ & 0.314  & 0.945 & 0.234 & 0.952 \\
& $\beta_0=-1.75, \beta_1=0.05, \beta_2=0.05, \beta_3=0.05$ & 0.251  & 0.958 & 0.236 & 0.953 \\
\hline
\end{tabular}
\caption{Root Mean Squared Errors (RMSE) and coverage of 95\% confidence intervals for the ATT from the proposed Inverse on the Propensity score Weighting method (IPW) and the Generalized Synthetic Control method (GSC), corresponding to different balancement Scenarios, for Case1 and Case2.}\label{Res_Sim_ATT}
\end{center}
}
\end{table}

\begin{table}[!th]
\small{
\begin{center}
\begin{tabular}
{c|l| @{\hspace{1\tabcolsep}} c @{\hspace{1\tabcolsep}} |c @{\hspace{1\tabcolsep}} |c| @{\hspace{1\tabcolsep}} c | @{\hspace{1\tabcolsep}} c}
\hline
\text{Case} & \text{Scenario} & Loadings & $\beta_0$ & $\beta_1$ & $\beta_2$ & $\beta_3$ \\
\hline
 & $\beta_0=-1.75, \beta_1=0.5, \beta_2=1, \beta_3=2$ & 0.322  & 0.140 & 0.509 & 0.674 & 0.793  \\
 1 & $\beta_0=-1.75, \beta_1=0.05, \beta_2=0.5, \beta_3=0.75$ & 0.316 & 0.134 & 0.262 & 0.335 & 0.379  \\
 & $\beta_0=-1.75, \beta_1=0.05, \beta_2=0.05, \beta_3=0.75$ & 0.319 & 0.141 & 0.238 & 0.293 & 0.348  \\
 & $\beta_0=-1.75, \beta_1=0.05, \beta_2=0.05, \beta_3=0.05$ & 0.319 & 0.133 & 0.196 & 0.218 & 0.256  \\
\hline
 & $\beta_0=-1.75, \beta_1=0.5, \beta_2=1, \beta_3=2$ & 0.647 & 0.166 & 0.537 & 0.671 & 0.703  \\
 2 & $\beta_0=-1.75, \beta_1=0.05, \beta_2=0.5, \beta_3=0.75$ & 0.623 & 0.141 & 0.225 & 0.288 & 0.286  \\
 & $\beta_0=-1.75, \beta_1=0.05, \beta_2=0.05, \beta_3=0.75$ & 0.636 & 0.145 & 0.188 & 0.260 & 0.225  \\
 & $\beta_0=-1.75, \beta_1=0.05, \beta_2=0.05, \beta_3=0.05$ & 0.640 & 0.133 & 0.092 & 0.102 & 0.119  \\
\hline
\end{tabular}
\caption{Root Mean Squared Errors (RMSE) for the three factor loadings jointly considered and the coefficients of the propensity score, from the proposed Inverse on the Propensity score Weighting method (IPW) corresponding to different balancement Scenarios, for Case1 and Case2.}\label{Res_Sim_beta_loadings}
\end{center}
}
\end{table}

\section{The short-run effect of Paris Agreement on European Stock Returns}

Our methodology is applied to estimate the causal effect of the Paris Agreement on European stock returns in the first quarter of 2016. Specifically, we evaluate market reactions of firms that demonstrate a strong commitment to sustainability after the policy implementation. We classify as treated units the European companies that issued at least one green bond during the period 2016–2019, as these financial instruments serve as a credible signal of their environmental engagement (\citealp[]{flammer2021corporate}). Specifically, we assume that these firms are treated as of the first quarter of 2016, even if the actual issuance occurs at a later point in time. This assumption relies on the idea that the decision to issue a green bond reflects internal strategic adjustments initiated shortly after the Paris Agreement, which in turn affect the financial structure and stock market valuation of the firm. 

To this end, two datasets are obtained from Bloomberg. We download the quarterly adjusted closing prices for S$\&$P350 Europe\footnote{S$\&$P350 Europe consists of 350 leading blue-chip companies drawn from 16 developed European markets, mirroring the sector and country weights of a broader European financial universe.} constituents from 30/06/1998 to 31/03/2016 and collect a dataset of EU corporate green bonds by retrieving all securities labelled 'green' in Bloomberg's fixed income database\footnote{Precisely, green bonds are defined as those for which the 'Green Bond Indicator' field in Bloomberg's fixed income database is set to 'Yes'.}. 
After merging the two datasets and excluding companies that issued a green bond before 2016, we are left with a final sample consisting of 29 treated units out of 224 companies.

For our analysis, we compute the standardized log returns before treatment and transform them to ensure each series has a mean zero. We then estimate via PCA an approximate factor model driven by three factors following $IC_1$ and $IC_2$ criteria of \cite{bai2002determining}. Based on the estimated loadings, $\boldsymbol{\hat{\lambda}}_i$, which summarize the unique characteristics of each unit, we compute the propensity score through a logistic regression. The proposed method leads to the following fitted model $\Pr(Z_{i,t} = 1|\hat{\lambda}_i) = -3.02 - 8.06 \, \hat{\lambda}_{i1} -8.27 \,\hat{\lambda}_{i2} - 17.03 \, \hat{\lambda}_{i3}$, with the adjusted standard error of the parameter estimators equal to $(0.76, 6.48, 4.50, 10.78)^{'}$.
To validate the inferential procedure, we assess the validity of the overlap assumption by analyzing the distribution of estimated propensity scores across the treatment and control groups. Figure \ref{fig:Figure Prop_Score} displays the histograms of the scores allowing for a visual inspection of the common support. The results indicate a satisfactory degree of overlap between the two groups.  Furthermore, the reliability of the estimated propensity scores hinges on the balancing of the weighted distributions of loadings across treatment and control groups that is a consequence of latent ignorability (Assumption \ref{Assumption 4}). To this end, we compute the absolute standardized difference (ASD). Table \ref{tab:asd_table} illustrates that the weighting significantly enhances the overall balance of the estimated loadings, with the maximum ASD value remaining below 0.60, much smaller than the critical threshold 1.96.

The estimated average treatment effect on the treated (ATT) is –0.0710 with a standard error of 0.0282. This suggests that the Paris Agreement had a statistically significant negative effect on green European stock returns in the first quarter of 2016. \textbf{} Following the policy, increased awareness of climate-related risks has led investors to reassess the trade-off between financial performance and environmental commitment. Green firms tend to trade at a premium compared to brown firms, reflecting their perceived lower risk and stronger alignment with sustainability objectives. Consequently, investors appear willing to accept lower expected returns in exchange for exposure to pro-environmental stocks. This behavior reflects the role of the Paris Agreement as a strong and credible policy signal that has reshaped investor expectations about climate change. By setting clear and long-term decarbonization objectives, the Agreement has prompted financial markets to reallocate capital toward climate-aligned sectors, supporting investment in clean technologies and sustainable infrastructure. This shift has also triggered a broader structural transformation, as market participants have revised traditional valuation models to incorporate climate-related risks and assess emerging investment opportunities. As a result, financial assets have been repriced to reflect the expected impact of the low-carbon transition on future cash flows, operating costs, and regulatory exposure.

\begin{figure}[H]
\centering
  \includegraphics[width=0.5\linewidth, height=6.5cm]{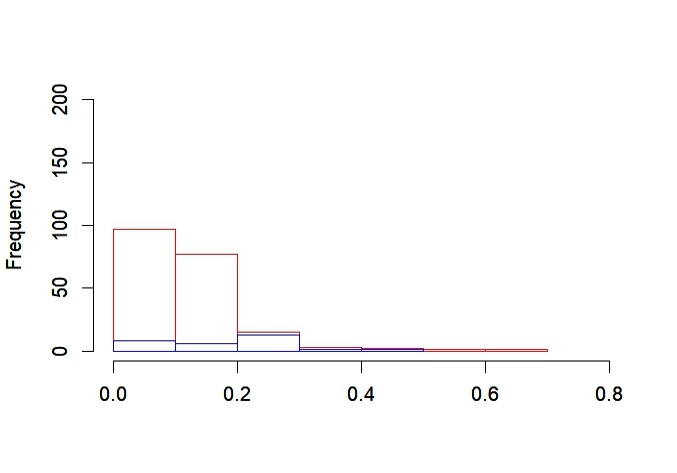}
  \caption{Histograms of the estimated propensity score for the treated (blue) and the untreated (red).}
\label{fig:Figure Prop_Score}
\end{figure}

\vspace{0.3em}
\begin{table}[!th]
\centering
\small{
\begin{tabular}
{c|c @{\hspace{1\tabcolsep}} |c}
\hline
{ } & {Unweighted} & {Weighted} \\
\hline
${\hat{\lambda}}_{i1}$ & 1.80 & 0.39 \\
${\hat{\lambda}}_{i2}$ & 2.48 & 0.52 \\
${\hat{\lambda}}_{i3}$ & 2.85 & 0.60 \\
\hline
\end{tabular}
\medskip
\caption{Absolute standardized difference of all factor loadings in the original and weighted data.}
\label{tab:asd_table}
}
\end{table}

\subsection{Falsification tests using negative controls}

Our analysis may potentially capture not only the causal effect of the Paris Agreement but also the effects of other policies and/or dynamics already in place prior to its adoption. To investigate this possibility, we perform falsification tests using negative controls \citep{rosenbaum2002}. Each test involves conducting the analysis during the pre-treatment period, using a fictitious policy implementation date. Specifically, we run three falsification tests with different fictitious policy dates: 01/04/2011, 01/04/2013, and 01/04/2015. Since the Paris Agreement had not yet been enacted, we expect no effects from these ‘false’ policies or any other contemporaneous dynamics.

The estimated effects\footnote{For each falsification test, we checked the overlap assumption as well as the balance of the weighted distribution of the estimated loadings. The corresponding propensity score histograms and tables reporting the absolute standardized differences (ASD) of the loadings are provided in Appendix \ref{app: sec13}} for the pre-Paris Agreement sample are reported in Table \ref{tab:fals_test}. The estimates are close to zero and not statistically significant, with very high p-values. These results support the conclusion that the effects identified in the main analysis are primarily attributable to the Paris Agreement, rather than to other confounding dynamics or policies.

\begin{table}[!th]
\centering
\small{
\begin{tabular}
{c|c @{\hspace{1\tabcolsep}} |c @{\hspace{1\tabcolsep}} |c}
\hline
{Fictitious implementation date} & {01/04/2011} & {01/04/2013} & {01/04/2015} \\
\hline
ATT & 0.0231 & 0.0126 & -0.0174 \\
Standard error & 0.0623 & 0.0187 & 0.0178 \\
p-value & 0.7103 & 0.4995 & 0.3267 \\
\hline
\end{tabular}
\medskip
\caption{Falsification tests based on different fictitious implementation dates of the treatment.}
\label{tab:fals_test}
}
\end{table}

\section{Conclusion}

We investigate the short-run impact of the Paris Agreement on European equity markets, focusing on firms that signalled environmental commitment through green bond issuance. To this end, we propose a novel causal inference method for panel data based on inverse propensity score weighting where the propensity score is function of latent factor loadings. The approach enables inference under latent ignorability within a causal super-population mode of inference, without requiring assumptions such as parallel trends or homogeneous treatment effects, while ensuring a higher degree of external validity. In particular, the large-sample properties of the estimator are derived under M-estimation theory 
yielding a closed-form expression for the asymptotic variance of ATT. This allows for causal inference without the need for computationally intensive methods. 


We apply this methodology to panel data on the constituents of the S$\&$P 350 Europe index, defining as treated those firms that issued at least one green bond between 2016 and 2019. Latent factor loadings are estimated via principal component analysis on pre-treatment returns and used to compute propensity scores through a logistic regression. We find a statistically significant negative impact of the Paris Agreement on the returns of treated firms in the first quarter of 2016 indicating that the issuing firms experienced lower returns following the policy. This behaviour reflects investors’ reassessment of the trade-off between financial performance and environmental commitment. Green firms tend to trade at a premium relative to brown firms, as market participants appear willing to accept lower expected financial returns in exchange for exposure to climate policy-aligned assets, reflecting their perceived lower risk. This underscores the role of the Paris Agreement as a strong and credible policy signal that has reshaped investor expectations about climate change, fostering capital reallocation toward low-carbon sectors.


Our approach contributes to the growing literature on causal inference with latent structures and offers advantages in terms of causal interpretability, policy relevance, and computational efficiency, while relaxing the traditional restrictive assumptions of causal panel methods.

A limitation of the proposed method lies in the limited economic interpretability of the estimated factor loadings. Although they capture unit-specific characteristics, their structure tends to be dense with many large loadings across factors, making interpretation difficult. This is left for future research where our goal is to enable an economic interpretation through factor rotation techniques. 


\begin{appendices}

\section{Appendix}\label{sec11}

\subsection{ Proof of Theorem~{\upshape\ref{thm1}}}\label{sec12}

To prove the theorem we need the following lemma:

\begin{lemma}\label{lemma1} Assume that the propensity score, $e_i$, follows a logistic model where the covariates are the loadings, $\boldsymbol{\hat{\lambda}}_i$, obtained via PCA, and coefficients $\boldsymbol{\beta}$ are estimated using maximum likelihood. Under standard assumptions for approximate factor models (\citealp[]{bai2013principal}), for $N, T_{0} \rightarrow \infty$ with $\sqrt{N}/T_{0} \rightarrow 0$, we have

$$
\begin{aligned}
\sqrt{N}(\hat{\boldsymbol{\beta}}-\boldsymbol{\beta}) \xrightarrow{d} N\left(0,\mathbb{V}^{\boldsymbol{\beta}}\right),
\end{aligned}
$$

where 

$$
\begin{aligned}
    \mathbb{V}^{\boldsymbol{\beta}}  =\boldsymbol{E}_{\beta \beta}^{-1} \left(\mathbb{E}\left[ \boldsymbol{S}_i\left(\boldsymbol{\beta},\boldsymbol{{\lambda}_i}\right)\boldsymbol{S}_i\left(\boldsymbol{\beta},\boldsymbol{{\lambda}_i}\right)^{\mathcal{T}}\right] + \mathbb{E}\left[  \nabla_\lambda \boldsymbol{S}_i\left(\boldsymbol{\beta},\boldsymbol{{\lambda}_i}\right) \,\frac{\boldsymbol{\Phi}_i}{T}\, \nabla_\lambda \boldsymbol{S}_i\left(\boldsymbol{\beta},\boldsymbol{{\lambda}_i}\right)^{\mathcal{T}}\right]\right)\boldsymbol{E}_{\beta \beta}^{-1},
\end{aligned}
$$
with $\boldsymbol{\Phi}_i=\lim _{T \rightarrow \infty} \frac{1}{T} \sum_{t, s=1}^T \mathbb{E}\left[\mathbf{\boldsymbol{F}}_t \mathbf{\boldsymbol{F}}_s^{\prime} \xi_{i t} \xi_{i s}\right]$, $\boldsymbol{S}_i\left(\boldsymbol{\beta},\boldsymbol{{\lambda}_i}\right)$ is  the score function for the propensity score and $\boldsymbol{E}_{\beta \beta} = \mathbb{E}[\frac{\partial \boldsymbol{S}_{i}\left(\boldsymbol{\beta},\boldsymbol{\lambda}_i \right)}{\partial \boldsymbol{\beta}}]$ is the information matrices for the propensity score.
\end{lemma}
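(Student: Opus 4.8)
The plan is to treat the estimation of $\boldsymbol{\beta}$ as a two-step M-estimation problem with \emph{generated regressors}, the first-stage ``data'' being the PCA loadings $\hat{\boldsymbol{\lambda}}_i$, and to recover the additional variance term as the contribution of their estimation error. First I would establish consistency $\hat{\boldsymbol{\beta}}\plim\boldsymbol{\beta}$ by standard M-estimation arguments (identification of the logistic model plus uniform convergence), using that $\hat{\boldsymbol{\lambda}}_i$ is consistent for $\boldsymbol{\lambda}_i$ under the conditions of \cite{bai2013principal}. Starting from the MLE first-order condition $\frac{1}{N}\sum_{i=1}^N \boldsymbol{S}_i(\hat{\boldsymbol{\beta}},\hat{\boldsymbol{\lambda}}_i)=\boldsymbol{0}$, I would take a joint Taylor expansion of the score in both arguments around the truth $(\boldsymbol{\beta},\boldsymbol{\lambda}_i)$, replace $\frac{1}{N}\sum_i \partial\boldsymbol{S}_i/\partial\boldsymbol{\beta}$ by its probability limit $\boldsymbol{E}_{\beta\beta}$ via a law of large numbers, and solve to obtain
\[
\sqrt{N}(\hat{\boldsymbol{\beta}}-\boldsymbol{\beta}) = -\boldsymbol{E}_{\beta\beta}^{-1}\left(\frac{1}{\sqrt{N}}\sum_{i=1}^N \boldsymbol{S}_i(\boldsymbol{\beta},\boldsymbol{\lambda}_i) + \frac{1}{\sqrt{N}}\sum_{i=1}^N \nabla_\lambda \boldsymbol{S}_i(\boldsymbol{\beta},\boldsymbol{\lambda}_i)(\hat{\boldsymbol{\lambda}}_i-\boldsymbol{\lambda}_i)\right) + o_p(1).
\]

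The crucial input is the asymptotic linear representation of the PCA estimator. Under the PC1 normalization (Assumption~\ref{Assumption 3}) the rotation matrix is asymptotically the identity, so \cite{bai2013principal} gives
\[
\hat{\boldsymbol{\lambda}}_i-\boldsymbol{\lambda}_i = \left(\frac{\boldsymbol{F}'\boldsymbol{F}}{T_0}\right)^{-1}\frac{1}{T_0}\sum_{t=1}^{T_0}\boldsymbol{F}_t \xi_{it} + \boldsymbol{R}_i,
\]
with $\boldsymbol{R}_i$ a higher-order remainder. Substituting the leading term and writing $\frac{1}{T_0}\sum_t=\frac{1}{\sqrt{T_0}}\frac{1}{\sqrt{T_0}}\sum_t$ turns the loading-correction into $\frac{1}{\sqrt{N T_0}}\sum_i \nabla_\lambda \boldsymbol{S}_i(\boldsymbol{F}'\boldsymbol{F}/T_0)^{-1}\frac{1}{\sqrt{T_0}}\sum_t \boldsymbol{F}_t\xi_{it}$, which is precisely the second summand in the representation of $\sqrt{N}(\hat{\boldsymbol{\beta}}-\boldsymbol{\beta})$ asserted in Theorem~\ref{thm1}; this yields the two-term influence function.

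For the variance I would apply a central limit theorem to each summand. The first, $\frac{1}{\sqrt{N}}\sum_i \boldsymbol{S}_i$, is an i.i.d. mean-zero average converging to $N(\boldsymbol{0},\mathbb{E}[\boldsymbol{S}_i\boldsymbol{S}_i^{\mathcal{T}}])$. For the correction term I would condition on the factor path and exploit that the idiosyncratic errors are i.i.d. across $i$ with mean zero; computing the second moment and using $\boldsymbol{F}'\boldsymbol{F}/T_0=\boldsymbol{I}_r$ together with $\frac{1}{T_0}\sum_{t,s}\mathbb{E}[\boldsymbol{F}_t\boldsymbol{F}_s'\xi_{it}\xi_{is}]\to\boldsymbol{\Phi}_i$ gives variance $\mathbb{E}[\nabla_\lambda\boldsymbol{S}_i(\boldsymbol{\Phi}_i/T)\nabla_\lambda\boldsymbol{S}_i^{\mathcal{T}}]$. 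The cross-covariance of the two blocks vanishes because, by latent ignorability (Assumption~\ref{Assumption 4}), $\mathbb{E}[\frac{1}{T_0}\sum_t\boldsymbol{F}_t\xi_{it}\mid\boldsymbol{\lambda}_i,Z_{i,T}]=\boldsymbol{0}$, so the two terms are asymptotically independent and their variances add inside the sandwich $\boldsymbol{E}_{\beta\beta}^{-1}(\cdot)\boldsymbol{E}_{\beta\beta}^{-1}$, delivering $\mathbb{V}^{\boldsymbol{\beta}}$.

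The main obstacle is controlling the remainders so that the generated-regressor structure contributes only at the stated order. The quadratic Taylor term $\frac{1}{2N}\sum_i(\hat{\boldsymbol{\lambda}}_i-\boldsymbol{\lambda}_i)'\nabla_{\lambda\lambda}\boldsymbol{S}_i(\hat{\boldsymbol{\lambda}}_i-\boldsymbol{\lambda}_i)$ is $O_p(1/T_0)$ since $\|\hat{\boldsymbol{\lambda}}_i-\boldsymbol{\lambda}_i\|^2=O_p(1/T_0)$, so after scaling by $\sqrt{N}$ it is $O_p(\sqrt{N}/T_0)$; this is exactly why the rate restriction $\sqrt{N}/T_0\to 0$ is imposed, as it simultaneously annihilates this asymptotic bias and the contribution of $\boldsymbol{R}_i$. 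A secondary subtlety is that the loading errors share the common factors $\boldsymbol{F}_t$ and are therefore not independent across $i$; I would resolve this by carrying the limiting arguments conditionally on the factor realization and invoking the cross-sectional independence of the idiosyncratic shocks, together with the weak-dependence bounds of the approximate factor model in \cite{bai2013principal}, when applying the cross-sectional CLT and the law of large numbers for $\boldsymbol{E}_{\beta\beta}$.
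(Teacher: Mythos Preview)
Your proposal is correct and follows essentially the same route as the paper: a mean-value/Taylor expansion of the score first-order condition in $(\boldsymbol{\beta},\boldsymbol{\lambda}_i)$, substitution of the Bai--Ng asymptotic linear representation for $\hat{\boldsymbol{\lambda}}_i-\boldsymbol{\lambda}_i$, and Slutsky's theorem to arrive at the two-term influence representation. If anything, your treatment is more explicit than the paper's on two points the paper leaves implicit: the role of the rate $\sqrt{N}/T_0\to 0$ in killing the quadratic Taylor remainder (and the higher-order PCA remainder $\boldsymbol{R}_i$), and the argument for why the cross-covariance between the score block and the loading-correction block vanishes so that the two variance pieces simply add inside the sandwich.
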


\begin{proof}[Proof of Lemma \ref{lemma1}.]

For a logistic regression of the form $e(\boldsymbol{\hat{\lambda}}_i) = 1/1+\exp (-{\boldsymbol{\hat{\lambda}}_i'\boldsymbol{\beta})}$ estimated by maximum likelihood, the mean value expansion of the first-order condition, $\sum_{i=1}^{N} \boldsymbol{S}_i(\boldsymbol{\hat{\beta}},\boldsymbol{{\hat{\lambda}}}_i)=0$, around the true values ($\boldsymbol{\beta, \lambda}$) is 

\begin{equation}
\begin{aligned}
 0 = &  \sum_{i=1}^{N} \boldsymbol{S}_i(\boldsymbol{\hat{\beta}},\boldsymbol{{\hat{\lambda}}}_i) \\
 = & \sum_{i=1}^{N}\boldsymbol{S}_i(\boldsymbol{\beta},{\boldsymbol{\lambda}_i}) +  \sum_{i=1}^{N} \frac{\partial \boldsymbol{S}_{i}\left(\boldsymbol{\beta},\boldsymbol{\lambda}_i \right)}{\partial \boldsymbol{\beta}}|_{\boldsymbol{\beta}=\tilde{\boldsymbol{\beta}}} (\boldsymbol{\hat{\beta}} - \boldsymbol{{\beta}}) \\
  & + \sum_{i=1}^{N} \frac{\partial \boldsymbol{S}_{i}\left(\boldsymbol{\beta},\boldsymbol{\lambda}_i \right)}{\partial \boldsymbol{\lambda}}|_{\boldsymbol{\lambda}=\tilde{\boldsymbol{\lambda}}} (\boldsymbol{\hat{{\lambda}}}_i-\boldsymbol{\lambda}_i).
\end{aligned} 
\end{equation} 

where $\boldsymbol{S}_i(\boldsymbol{\hat{\beta}},\boldsymbol{\hat{\lambda}}_i)$ represents the score function, while $\boldsymbol{\tilde{\beta}}$ and $\boldsymbol{\tilde{\lambda}}_i$ are the mean values. 

In Equation (8) and Theorem 1, under the identifying restrictions and standard assumptions, \cite{bai2013principal} show that if $\sqrt{T_{0}} /N \rightarrow 0$ and $H^{-1}=I_r+O_p\left(\delta_{N T_{0}}^{-2}\right)$, then
$$
\sqrt{T_{0}}\left(\boldsymbol{\hat{\lambda}}_i-\boldsymbol{\lambda}_i\right)=\left(\frac{\boldsymbol{F}^{\prime} \boldsymbol{F}}{T_{0}}\right)^{-1} \frac{1}{\sqrt{T_{0}}} \sum_{t=1}^{T_{0}} \boldsymbol{F}_t e_{i t}+o_p(1).
$$

In particular, they prove that when $N, T_{0} \rightarrow$ $\infty$ and $\sqrt{T_{0}}/N \rightarrow 0$

\begin{equation}
\begin{aligned} 
\sqrt{T_{0}}\left(\boldsymbol{\hat{\lambda}}_i-\boldsymbol{\lambda}_i\right) \xrightarrow{d}n\left(0, \boldsymbol{\Phi_i}\right) .
\end{aligned} 
\end{equation} 

where $\boldsymbol{\Phi}_i=\lim _{T_{0} \rightarrow \infty} \frac{1}{T_{0}} \sum_{t, s=1}^{T_{0}} \mathbb{E}\left[\mathbf{\boldsymbol{F}}_t \mathbf{\boldsymbol{F}}_s^{\prime} \xi_{i t} \xi_{i s}\right]$. Then, rewriting the initial mean value expansion as 

\begin{equation} 
\begin{aligned}
 0 = &\frac{1}{\sqrt{N}}\sum_{i=1}^{N}\boldsymbol{S}_i(\boldsymbol{\beta},{\boldsymbol{\lambda}_i}) +  \frac{\sqrt{N}}{{N}}\sum_{i=1}^{N} \frac{\partial \boldsymbol{S}_{i}\left(\boldsymbol{\beta},\boldsymbol{\lambda}_i \right)}{\partial \boldsymbol{\beta}}|_{\boldsymbol{\beta}=\tilde{\boldsymbol{\beta}}} (\boldsymbol{\hat{\beta}} - \boldsymbol{{\beta}}) \\
 &+ \frac{1}{\sqrt{N\,T_{0}}} \sum_{i=1}^{N}  \frac{\partial \boldsymbol{S}_{i}\left(\boldsymbol{\beta},\boldsymbol{\lambda}_i \right)}{\partial \boldsymbol{\lambda}}|_{\boldsymbol{\lambda}=\tilde{\boldsymbol{\lambda}}} \sqrt{T_{0}}\,(\boldsymbol{\hat{{\lambda}}}_i-\boldsymbol{\lambda}_i).
\end{aligned} 
\end{equation} 

Applying the Slutsky theorem combined with the results of \cite{bai2013principal}, we obtain

 \begin{equation} 
 \begin{aligned}
\sqrt{N}(\hat{\boldsymbol{\beta}}-\boldsymbol{\beta}) = & - \boldsymbol{E}_{\beta \beta}^{-1}
\left( \frac{1}{\sqrt{N}} \sum_{i=1}^{N}\boldsymbol{S}_i(\boldsymbol{\beta},\boldsymbol{\lambda}_i) \right.\\
&+ \left.\frac{1}{\sqrt{N T_{0}}} \sum_{i=1}^{N} \frac{\partial \boldsymbol{S}_{i}\left(\boldsymbol{\beta},\boldsymbol{\lambda}_i \right)}{\partial \boldsymbol{\lambda}} \left(\frac{ \boldsymbol{F}^{\prime} \boldsymbol{F}}{T_{0}}\right)^{-1} \frac{1}{\sqrt{T_{0}}} \sum_{t=1}^{T_{0}} \boldsymbol{F}_t e_{i t} \right) + o_p(1).
\end{aligned} 
\end{equation}

where $\boldsymbol{E}_{\beta \beta} = \mathbb{E}[\frac{\partial \boldsymbol{S}_{i}\left(\boldsymbol{\beta},\boldsymbol{\lambda}_i \right)}{\partial \boldsymbol{\beta}}]$ denotes the Hessian matrix of the log-likelihood function. This completes the proof. 

\end{proof}

\begin{remark}
    The empirical counterpart of $\Phi_i$ is the CS-HAC estimator presented in \cite{bai2006confidence}.
\end{remark}

To prove Theoreom \ref{thm1}, we recall the ATT estimator

 \begin{equation} 
 \begin{aligned}
    \hat{\tau}^{ATT} = \hat{\tau}_1 - \hat{\tau}_0 = \frac{\sum_{i=1}^{N}Z_{i,T}\,Y_{i,T}}{\sum_{i=1}^{N}Z_{i,T}} - \frac{\sum_{i=1}^{N} \,(1-Z_{i,T})\,Y_{i,T}\,(\hat{e}_{i}/(1-\hat{e}_{i})}{\sum_{i=1}^{N} \,(1-Z_{i,T})(\hat{e}_{i}/(1-\hat{e}_{i})}.
\end{aligned} \label{equation 10}
\end{equation} 

From the first term, we derive its estimating equation

 \begin{equation} 
 \begin{aligned}
  \sum_{i=1}^{N} U_{i}\left(\hat{\tau}_1\right)= \sum_{i=1}^{N}Z_{i,T}\left(Y_{i,T}-\hat{\tau}_1\right)=0.
\end{aligned} 
\label{eq8}
\end{equation} 

The first-order condition with respect to $\tau_1$ is

 \begin{equation} 
 \begin{aligned}
  \frac{\partial U_{i}\left(\tau_1\right)}{\partial \tau_1}=-Z_{i,T}.
\end{aligned} 
\end{equation} 

Expanding equation \eqref{eq8}
around the true value $\tau_1$, we get

 \begin{equation} 
 \begin{aligned}
     0=\frac{1}{\sqrt{n}} \sum_{i=1}^{N} U_{i}\left(\hat{\tau}_1\right)=\frac{1}{\sqrt{n}} \sum_{i=1}^{N} U_{i}\left(\tau_1\right)+\left(\left.\frac{1}{n} \sum_{i=1}^{N} \frac{\partial U_{i}\left(\tau_1\right)}{\partial \tau_1}\right|_{\tau_1=\tilde{\tau}_1}\right) \sqrt{n}\left(\hat{\tau}_1-\tau_1\right),
\end{aligned} 
\end{equation} 

where $\tilde{\tau}_1$ is the mean value trapped between $\tau_1$ and $\hat{\tau}_1$. Then, applying Slutsky’s theorem we obtain

 \begin{equation} 
 \begin{aligned}
\sqrt{n}\left(\hat{\tau}_1-\tau_1\right)=\eta_{1}^{-1}\left(\frac{1}{\sqrt{n}} \sum_{i=1}^{N} U_{i}\left(\tau_1\right)\right)+o_p(1),
\end{aligned} 
\end{equation} 

where $\eta_{1} = \mathbb{E}\bigl[Z_{i,T}\bigr] = \Pr(Z_{i,T}= 1).$  From the second term of equation (\ref{equation 10}) we derive the following estimating function

\begin{equation} 
\begin{aligned}
\sum_{i=1}^{N} U_{i}\left(\hat{\tau}_0, \hat{\boldsymbol{\beta}}\right) & = \frac{\sum_{i=1}^{N} \,(1-Z_{i,T})\,Y_{i,T}\,(\hat{e}_{i}/(1-\hat{e}_{i})}{\sum_{i=1}^{N} \,(1-Z_{i,T})(\hat{e}_{i}/(1-\hat{e}_{i})} - \tau_0  \\
& = \sum_{i=1}^{N} \,(1-Z_{i,T})\,Y_{i,T}\,(\hat{e}_{i}/(1-\hat{e}_{i})-\bigr(\sum_{i=1}^{N} \,(1-Z_{i,T})(\hat{e}_{i}/(1-\hat{e}_{i})\bigl)\tau_0\\
& = \sum_{i=1}^{N} \,(1-Z_{i,T})\,(\hat{e}_{i}/(1-\hat{e}_{i})(\,Y_{i,T}-\tau_0) = 0.\\
\end{aligned} 
\label{eq12}
\end{equation} 

The gradient is given by

 \begin{equation} 
 \begin{aligned}
& \frac{\partial U_i\left(\tau_0, \boldsymbol{\beta}\right)}{\partial \tau_0}=-\frac{\hat{e}_{i}}{1-\hat{e}_{i}}\,(\,Z_{i,T}-1)  \\
& \frac{\partial U_i\left(\tau_0, \boldsymbol{\beta}\right)}{\partial \boldsymbol{\beta}}= (1-Z_{i,T})\, \frac{{e}_{\boldsymbol{\beta}}}{(1-e_i)^2}\,(\,Y_{i,T}-\tau_0)
\end{aligned} 
\end{equation} 

where $e_{\boldsymbol{\beta}} \equiv \frac{\partial e_{i}}{\partial \boldsymbol{\beta}}$. Equation \eqref{eq12} is then expanded around the true values $(\tau_0,\boldsymbol{\beta})$,

 \begin{equation} \begin{aligned}
& \frac{1}{\sqrt{n}} \sum_{i=1}^{N} U_{i}\left(\hat{\tau}_0, \hat{\boldsymbol{\beta}} \right) \\
= & \frac{1}{\sqrt{n}} \sum_{i=1}^{N} U_{i}\left(\tau_0, \boldsymbol{\beta}\right)-\left(\frac{1}{n} \sum_{i=1}^{N} \frac{\tilde{e}_{i}}{1-\tilde{e}_{i}}\,(\,Z_{i,T}-1)\right)^T \sqrt{n}\left(\hat{\tau}_0-\tau_0\right) \\
&+\left(\frac{1}{n} \sum_{i=1}^{N}  (1-Z_{i,T})\,\frac{\tilde{e}_{\boldsymbol{\beta}}}{(1-\tilde{e}_i)^2}\,(\,Y_{i,T}-\tilde{\tau}_0)\right)^T \sqrt{n}(\hat{\boldsymbol{\beta}}-\boldsymbol{\beta}) = 0
\end{aligned} 
\end{equation} 

where $\left(\tilde{\tau}_0 ; \tilde{\boldsymbol{\beta}}\right)$ is the linear combination between $\left(\tau_0 ; \boldsymbol{\beta}\right)$ and $\left(\hat{\tau}_0 ; \hat{\boldsymbol{\beta}}\right), \tilde{e}_{i} \equiv e\left(\boldsymbol{\hat{\lambda}}_i ; \tilde{\boldsymbol{\beta}}\right)$, and $\left.\tilde{e}_{\boldsymbol{\beta}} \equiv \frac{\partial e_{i}}{\partial \boldsymbol{\beta}}\right|_{\boldsymbol{\beta}=\tilde{\boldsymbol{\beta}}}$. Then, by Slutsky's theorem we get 

\begin{equation} 
\begin{aligned}
\sqrt{n}\left(\hat{\tau}_0-\tau_0\right) = &  \eta_{2}^{-1}\left(\frac{1}{\sqrt{n}} \sum_{i=1}^{N} U_{i}\left(\tau_0, \boldsymbol{\beta}\right)\right) \\
& +\eta_{2}^{-1} \mathbb{E} \left((1-Z_{i,T})\,\frac{{e}_{\boldsymbol{\beta}}}{(1-{e}_i)^2}\,(\,Y_{i,T}-{\tau}_0) \right)^T \sqrt{n}\,(\hat{\boldsymbol{\beta}}-\boldsymbol{\beta}) +o_p(1),
 \end{aligned} 
 \end{equation} 

where $\eta_{2} = \mathbb{E}\bigl[\frac{{e}_{i}}{1-{e}_{i}}\,(\,Z_{i,T}-1)\bigr]$. Finally, combining the expansions \eqref{eq8} and \eqref{eq12}, we obtain

 \begin{equation} 
 \begin{aligned}
\sqrt{n}\left(\hat{\Delta}^{ATT}-\Delta^{ATT}\right) & =\sqrt{n}\left(\hat{\tau}_1-\tau_1\right)-\sqrt{n}\left(\hat{\tau}_0-\tau_0\right) \\
& = \frac{1}{\sqrt{n}} \sum_{i=1}^{N} \mathcal{I}_{i}+o_p(1),
\end{aligned} 
\end{equation} 

which results in 

\begin{equation} 
\begin{aligned}
\hat{\tau}_1-\hat{\tau}_0=\left(\tau_1-\tau_0\right)+n^{-1} \sum_{i=1}^{N} \mathcal{I}_{i}+o_p(1),
\end{aligned} 
\end{equation} 

where

\begin{equation} 
\begin{aligned}
\mathcal{I}_{i} = \eta_{1}^{-1}\,U_{i}\left(\tau_1\right)-\eta_{2}^{-1}\,U_{i}\left(\tau_0, \boldsymbol{\beta}\right)- \eta_{2}^{-1} \boldsymbol{H}_\beta^T \sqrt{n}\,(\hat{\boldsymbol{\beta}}-\boldsymbol{\beta}),
\end{aligned} 
\end{equation} 

and

 \begin{equation} 
 \begin{aligned}
    \boldsymbol{H}_\beta = & \, \mathbb{E} \left((1-Z_{i,T})\,\frac{{e}_{\boldsymbol{\beta}}}{(1-{e}_i)^2}\,(\,Y_{i,T}-{\tau}_0) \right)\\
\sqrt{n}(\hat{\boldsymbol{\beta}}-\boldsymbol{\beta}) = & - \boldsymbol{E}_{\beta \beta}^{-1}
\left( \frac{1}{\sqrt{n}} \sum_{i=1}^{N}\boldsymbol{S}_i(\boldsymbol{\beta},\boldsymbol{\lambda}_i) \right.\\
&+ \left.\frac{1}{\sqrt{n T_{0}}} \sum_{i=1}^{N} \frac{\partial \boldsymbol{S}_{i}\left(\boldsymbol{\beta},\boldsymbol{\lambda}_i \right)}{\partial \boldsymbol{\lambda}} \left(\frac{\boldsymbol{F}^{\prime} \boldsymbol{\boldsymbol{F}}}{T_{0}}\right)^{-1} \frac{1}{\sqrt{T_{0}}} \sum_{t=1}^{T_{0}}\boldsymbol{F}_t e_{i t} \right) + o_p(1),
\end{aligned} 
\end{equation}

where $\boldsymbol{S}_i(\boldsymbol{\beta})$ is the score function for the propensity score, and $\boldsymbol{E}_{\beta \beta} = \mathbb{E}[\frac{\partial \boldsymbol{S}_{i}\left(\boldsymbol{\beta},\boldsymbol{\lambda}_i \right)}{\partial \boldsymbol{\beta}}]$ is the information matrices for the propensity score. The empirical M-estimation variance of the moment estimator can be calculated by considering the empirical counterpart of each element. This completes the proof.

\subsection{Preliminar check for the falsification tests}\label{app: sec13}

For the three falsification tests, Figure \ref{fig:Figure PlA_Prop_Score_First_Date} to \ref{fig:Figure PlA_Prop_Score_Third_Date} show the histograms of the estimated propensity scores by treatment group, while Table \ref{tab:asd_table_fict_1} to \ref{tab:asd_table_fict_3} display the ASDs for the three identified loadings in both the original and weighted samples. We observe a satisfactory degree of overlap in the estimated propensity scores and maximum ASD values below the critical threshold 1.96. 


\begin{figure}[H]
\centering
  \includegraphics[width=0.5\linewidth, height=6.5cm]{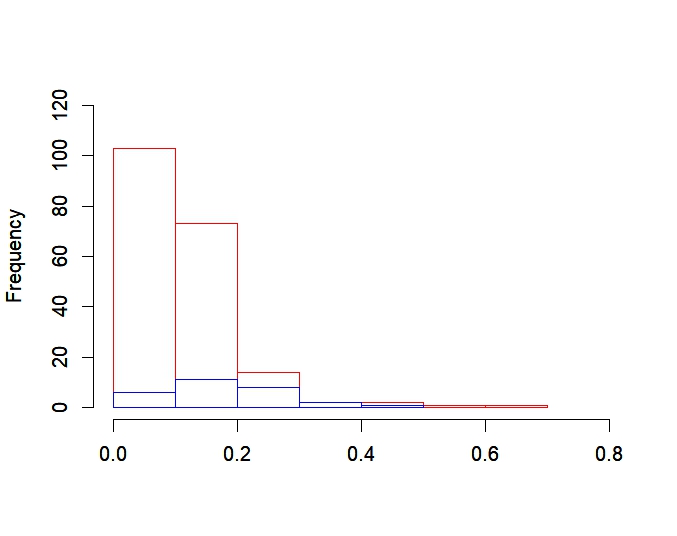}
  \caption{Histograms of the estimated propensity score for the treated (blue) and the untreated (red). Fictitious implementation date: 01/04/2011.}
\label{fig:Figure PlA_Prop_Score_First_Date}
\end{figure}

\begin{figure}[H]
\centering
  \includegraphics[width=0.5\linewidth, height=6.5cm]{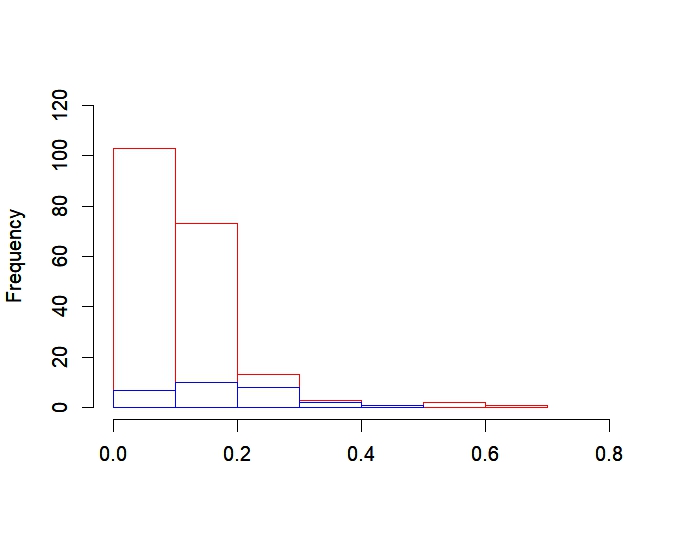}
  \caption{Histograms of the estimated propensity score for the treated (blue) and the untreated (red). Fictitious implementation date: 01/04/2013.}
\label{fig:Figure PlA_Prop_Score_Second_Date}
\end{figure}

\begin{figure}[H]
\centering
  \includegraphics[width=0.5\linewidth, height=6.5cm]{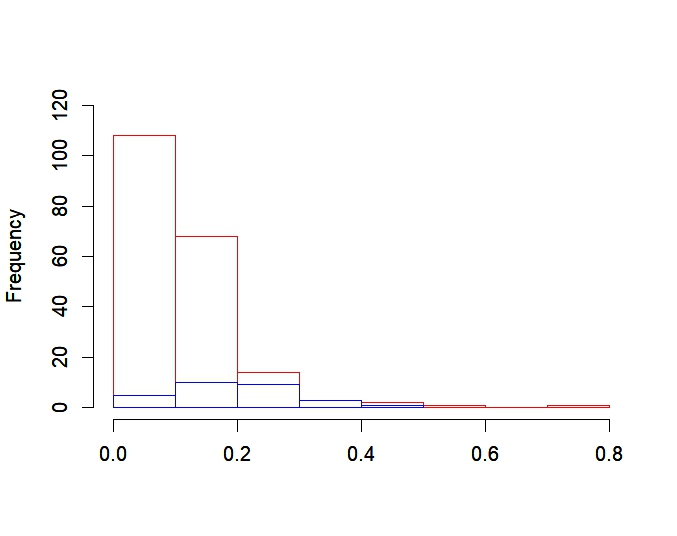}
  \caption{Histograms of the estimated propensity score for the treated (blue) and the untreated (red). Fictitious implementation date: 01/04/2015.}
\label{fig:Figure PlA_Prop_Score_Third_Date}
\end{figure}

\begin{table}[H]
\vspace{0.5em}
\centering
\small{
\begin{tabular}
{c|c @{\hspace{1\tabcolsep}} |c}
\hline
{ } & {Unweighted} & {Weighted} \\
\hline
${\hat{\lambda}}_{i1}$ & 1.87 & 0.56 \\
${\hat{\lambda}}_{i2}$ & 2.62 & 1.26 \\
${\hat{\lambda}}_{i3}$ & 2.68 & 0.68 \\
\hline
\end{tabular}
\medskip
\caption{Absolute standardized difference of all factor loadings in the original and weighted data. Fictitious implementation date: 01/04/2011.}
\label{tab:asd_table_fict_1}
}
\end{table}

\begin{table}[H]
\vspace{0.5em}
\centering
\small{
\begin{tabular}
{c|c @{\hspace{1\tabcolsep}} |c}
\hline
{ } & {Unweighted} & {Weighted} \\
\hline
${\hat{\lambda}}_{i1}$ & 1.91 & 0.40 \\
${\hat{\lambda}}_{i2}$ & 2.48 & 0.73 \\
${\hat{\lambda}}_{i3}$ & 2.74 & 0.50 \\
\hline
\end{tabular}
\medskip
\caption{Absolute standardized difference of all factor loadings in the original and weighted data. Fictitious implementation date: 01/04/2013.}
\label{tab:asd_table_fict_2}
}
\end{table}

\begin{table}[H]
\vspace{0.5em}
\centering
\small{
\begin{tabular}
{c|c @{\hspace{1\tabcolsep}} |c}
\hline
{ } & {Unweighted} & {Weighted} \\
\hline
${\hat{\lambda}}_{i1}$ & 1.94 & 0.36\\
${\hat{\lambda}}_{i2}$ & 2.49 & 0.70 \\
${\hat{\lambda}}_{i3}$ & 2.62 & 0.44 \\
\hline
\end{tabular}
\medskip
\caption{Absolute standardized difference of all factor loadings in the original and weighted data. Fictitious implementation date: 01/04/2015.}
\label{tab:asd_table_fict_3}
}
\end{table}

\end{appendices}

\section{Competing interests}
The authors confirm have no conflicts of interest to
disclose.
\section{Acknowledgments}



We are grateful to Fabrizia Mealli, as well as to the participants of the Rome Workshop on Econometrics (September 2024, EIEF), for their helpful comments and suggestions. Andrea Mercatanti and Giacomo Morelli are funded by Sapienza grant "Causal-effect estimation of systemic risk" (grant No. RM123188F76ABDED).  Giacomo Morelli also acknowledges the financial support of the PRIN2022 grant (grant No. 2022ELYHCW).


\end{document}